\pgfplotsset{compat=newest,compat/show suggested version=false}
\pgfplotsset{
    legend image with text/.style={
        legend image code/.code={%
            \node[anchor=center] at (0.3cm,0cm) {#1};
        }
    },
}
\newcommand*{\overtabline}{%
  \noalign{%
    \vskip-.5\dimexpr\ht\@arstrutbox+\dp\@arstrutbox\relax
    \vskip-.2pt\relax
    {\color{red}\hrule height 0.8pt}
    \vskip-.2pt\relax
    \vskip+.5\dimexpr\ht\@arstrutbox+\dp\@arstrutbox\relax
  }%
}
\newtheorem{theorem}{Theorem}[section]
\newtheorem{lemma}[theorem]{Lemma}
\newtheorem{corollary}[theorem]{Corollary}
\newtheorem{proposition}[theorem]{Proposition}
\newtheorem{definition}[theorem]{Definition}
\newtheorem{example}[theorem]{Example}
\newtheorem{remark}[theorem]{Remark}
\newcommand\pare[1]{\left(#1\right)}
\newcommand\brac[1]{\left[#1\right]}
\newcommand\curl[1]{\left\{#1\right\}}
\newcommand\mcal{\mathcal}
\newcommand\mbf{\mathbf}
\newcommand\mbb{\mathbb}
\newcommand\msf{\mathsf}
\newcommand\sM{\msf{M}}
\newcommand\mrm{\mathrm}
\newcommand\veps{\varepsilon}
\newcommand\pos{\geq 0}
\newcommand\bt{\bm{t}}
\newcommand\bx{\bm{x}}
\newcommand\by{\bm{y}}
\newcommand\ba{\bm{a}}
\newcommand\bb{\mbf{b}}
\DeclareMathAlphabet{\mbfsf}{\encodingdefault}{\sfdefault}{bx}{n}
\newcommand\bin{\msf{b}}
\newcommand\bbin{\mbfsf{b}}
\newcommand\bi{\bm{i}}
\newcommand\bj{\bm{j}}
\newcommand\bk{\bm{k}}
\newcommand\bl{\bm{\ell}}
\newcommand\bs{\bm{s}}
\newcommand\bv{\bm{v}}
\newcommand\bw{\bm{w}}
\newcommand\bzero{\bm{0}}
\newcommand\seq{v}
\newcommand\bseq{\bv}
\newcommand\seqq{w}
\newcommand\bseqq{\bw}
\newcommand\gessel{g}
\newcommand\bgessel{\bm{g}}
\newcommand\br{\mbf{r}}
\newcommand\bone{\mbf{1}}
\newcommand\bgamma{\boldsymbol\gamma}
\newcommand\N{\mbb{N}}
\newcommand\Z{\mbb{Z}}
\newcommand\K{\mbb{K}}
\newcommand\F{\mbb{F}}
\newcommand\Q{\mbb{Q}}
\newcommand\cG{\mcal{G}}
\newcommand\cH{\mcal{H}}
\newcommand\cA{\mcal{A}}
\newcommand\cC{\mcal{C}}
\newcommand\cS{\mcal{S}}
\newcommand\cT[1][]{
  {\ifthenelse{\equal{#1}{}}{\mcal{T}}{\mcal{T}\!\!\pare{#1}}}}
\newcommand\T{\mrm{T}}
\newcommand\ceil[1]{\left\lceil#1\right\rceil}
\newcommand\floor[1]{\left\lfloor#1\right\rfloor}
\newcommand\ideal[1]{\left\langle#1\right\rangle}
\newcommand\gb{Gr\"obner basis\xspace}
\newcommand\gbs{Gr\"obner bases\xspace}
\newcommand\sgb{sparse Gr\"obner basis\xspace}
\newcommand\sgbs{sparse Gr\"obner bases\xspace}
\newcommand\Crel{C-relation\xspace}
\newcommand\Crels{C-relations\xspace}
\newcommand\Cfin{C-finite\xspace}
\newcommand\Prel{P-relation\xspace}
\newcommand\Prels{P-relations\xspace}
\newcommand\Pfin{P-finite\xspace}
\newcommand\bob{border basis\xspace}
\newcommand\Fquatre{\textsc{\texorpdfstring{F\textsubscript{4}}{F4}}\xspace}
\newcommand\Fcinq{\textsc{\texorpdfstring{F\textsubscript{5}}{F5}}\xspace}
\newcommand\FGLM{\textsc{FGLM}\xspace}
\newcommand\sFGLM{\textsc{Scalar-FGLM}\xspace}
\newcommand\lsFGLM{\textsc{Lattice Scalar-FGLM}\xspace}
\newcommand\asFGLM{\textsc{Adaptive Scalar-FGLM}\xspace}
\newcommand\lasFGLM{\textsc{Lattice Adaptive Scalar-FGLM}\xspace}
\newcommand\spFGLM{\textsc{Sparse-FGLM}\xspace}
\newcommand\agbb{\textsc{Artinian Gorenstein \bob}\xspace}
\newcommand\bms{Berlekamp--Massey--Sakata\xspace}
\newcommand\bmy{Berlekamp--Massey\xspace}
\newcommand\bln{Beckermann--Labahn\xspace}
\newcommand\ie{\mbox{{i.e.}}\xspace}
\newcommand\wrt{\mbox{wrt.}\xspace}
\newcommand\msolve{\texttt{msolve}\xspace}
\DeclareMathOperator\DRL{\textsc{drl}}
\DeclareMathOperator\LEX{\textsc{lex}}
\newcommand\adots{\mathinner{%
  \mkern1mu\raise1pt\hbox{.}%
  \mkern2mu\raise4pt\hbox{.}%
  \mkern2mu\raise7pt\vbox{\kern7pt\hbox{.}}\mkern1mu}}
\DeclareMathOperator\rank{rank}
\DeclareMathOperator\LM{\textsc{lm}_{\prec}}
\DeclareMathOperator\LC{\textsc{lc}_{\prec}}
\DeclareMathOperator\GL{GL}
\DeclareMathOperator\Staircase{Staircase}
\DeclareMathOperator\supp{supp}
\DeclareMathOperator\GCD{\textsc{gcd}}
\DeclareMathOperator\LCM{\textsc{lcm}}
\DeclareMathOperator\Card{\#}
\begin{document}

\begin{frontmatter}



\title{Guessing Gr\"obner Bases of Structured Ideals of Relations of Sequences}


\author{J\'er\'emy Berthomieu}
\cortext[cor1]{Laboratoire d'Informatique de Paris~6,
  Sorbonne Universit\'e, bo\^ite courrier~169, 4~place
  Jussieu, F-75252 Paris Cedex~05, France.}
\ead{jeremy.berthomieu@lip6.fr}

\author{Mohab Safey El Din}
\address{Sorbonne Universit\'e, \textsc{CNRS}, \textsc{LIP6},
  F-75005, Paris, France}
\ead{mohab.safey@lip6.fr}

\begin{abstract}
  Assuming sufficiently many terms of an $n$-dimensional table defined
over a field are given, we aim at guessing the linear recurrence
relations with either constant or polynomial coefficients they
satisfy.  In many applications, the table terms come along with a
structure: for instance, they may be zero outside of a cone, they may
be built from a Gr\"obner basis of an ideal invariant under the action
of a finite group. Thus, we show how to take advantage of this structure
to reduce both the number of table queries and the number of
operations in the base field to recover the ideal of relations of the
table. In applications like in combinatorics, where all these
zero terms make us guess many fake relations, this allows us to drastically
reduce these wrong guesses. These algorithms have been implemented
and, experimentally, they let us handle examples that we could not
manage otherwise.

Furthermore, we show which kind of cone and lattice structures are
preserved by skew-polynomial multiplication. This allows us to speed
the
guessing of linear recurrence relations with
polynomial coefficients up by computing
sparse Gr\"obner bases or Gr\"obner bases of an ideal invariant
under the action of a finite group in a ring of skew-polynomials.


\end{abstract}

\begin{keyword}
  Linear recurrence relations, Gr\"obner bases, Symmetries, Change of orderings
\end{keyword}

\end{frontmatter}


\section{Introduction}\label{s:intro}
\paragraph*{Problem statement and motivations}
Given a sequence
$\bseq=(\seq_{i_1,\ldots,i_n})_{i_1,\ldots,i_n\geq 0}$, we consider
the \emph{table} made of a finite subset of its terms.
Computing or \emph{guessing} linear recurrence relations satisfied by
such a table
is a fundamental problem in coding theory for cyclic
codes~\cite{BoseRC1960,Hocquenghem1959} of dimension $n\geq 1$,
combinatorics and computer algebra for solving sparse linear systems,
performing sparse polynomial interpolation,
polynomial least-square approximation and \gbs changes of
orderings in $n\geq 1$ variables~\cite{FaugereM2011,FaugereM2017}.
Furthermore, computing these
relations with polynomial coefficients in the indices allows us to
predict the growth of its terms, to classify the differential nature
of their generating series or to evaluate said generating
series~\cite{Mezzarobba2019}.

Depending on the context, an upper bound on the number of table terms
might be known in order to guess these relations. For instance, in coding
theory, this is related to the length and the minimum distance of the
code. In the \gbs change of orderings application, an upper bound is
given by the
degree of the ideal and the number of variables.
Whenever no upper bound is known, one is still
restricted to only consider a finite number of table terms to guess
the linear recurrence relations the table satisfies. Thus, some
of these relations may be proven incorrect when tested with many more
table terms; further such relations will be called \emph{fake
  relations}.
This happens for instance in combinatorics where the
nature itself of the table may be unknown.

In many applications, the table comes with a structure. For instance,
in combinatorics, for $n$D-space
walks in the nonnegative orthant, $\seq_{i_0,i_1,\ldots,i_n}$ counts
the number of ways to reach
$(i_1,\ldots,i_n)\in\N^n$ in $i_0$ steps of size
$1$~\cite{BostanBMKM2016,BousquetMP2003}.
Therefore, $\seq_{i_0,i_1,\ldots,i_n}$ is trivially $0$
outside the cone $i_1,\ldots,i_n\leq i_0$. Thus, computationwise, not
considering these terms would reduce the size of the table and thus
might be beneficial for guessing the
linear recurrence relations satisfied by the table.
Hence, the goal is
to exploit this structure to both reduce the number of table queries and the
number of operations to guess the \gb of the ideal of relations. 

\paragraph*{Prior results}
We distinguish two cases: the one-dimensional
case, where tables are with one index, and the multidimensional one,
where tables have $n>1$ indices.

In the one-dimensional case, given the first $D$
terms of a table, the \bmy
algorithm~\cite{Berlekamp1968,Massey1969} guesses the
linear recurrence relations with constant coefficients of smaller
order. Using 
fast extended Euclidean algorithm, this
algorithm can do so in $O(\sM(D)\log D)$
operations in the base field~\cite{BrentGY1980}, where
$\sM(D)=O(D\log D\log\log D)$~\cite{CantorK1991} is a cost
function for multiplying two univariate polynomials of degree at most $D$.
Through Hermite-Pad\'e approximants, the
\bln algorithm~\cite{BeckermannL1994} can be used to guess several
relations with polynomial coefficients including the one of minimal
order. Let us notice that finding relations with polynomial
coefficients is a special case of Hermite-Pad\'e
approximants for which the \bln algorithm is not quasi-optimal in the input
size.

In the multidimensional case, several algorithms were designed for
guessing linear recurrence relations with constant coefficients
satisfied by the first terms of the tables using linear algebra
routines. For instance, the \bms
algorithm~\cite{Sakata1988,Sakata1990,Sakata2009}, the \sFGLM
algorithm~\cite{BerthomieuBF2015,BerthomieuBF2017} or the \agbb
algorithm~\cite{Mourrain2017}. Given sufficiently many terms, the
first two return a \gb of the ideal of relations while the third one
returns a \bob of this ideal. Furthermore, in~\cite{BerthomieuF2018}
the authors designed an algorithm extending both the \bms and the \sFGLM
algorithms using polynomial arithmetic and in~\cite{BerthomieuF2016},
they extended the \sFGLM algorithm for guessing relations with
polynomial coefficients.
However, none of these algorithms were designed to
take the structure of the table terms into account. Another classical
technique is the ``ansatz + linear system solving'' approach for
finding relations. Usually the ansatz allows the user to find a set of
relations, then a post-processing is needed in order to recover, for
instance, the \gb of the ideal spanned by the computed
relations. Though, if the ansatz is far from being tight, then the
linear system to solve might be equivalent to the one of the \sFGLM algorithm.

\gbs are the output of several algorithms for guessing linear
recurrence relations and are a fundamental tool in polynomial systems
solving. In many applications, polynomials systems come with a
structure, for instance they span an ideal globally invariant under
the action of a finite group $G$ or their supports are in a cone. From
the table viewpoint, these are 
related to only considering table terms lying either on a
lattice~\cite{KauersV2019} or in a cone.

In~\cite{FaugereS2013}, the authors show that for such an ideal, \gbs
computations through the \Fquatre~\cite{Faugere1999},
\Fcinq~\cite{Faugere2002} and \FGLM~\cite{FaugereGLM1993} algorithms
can be sped up with a factor depending on $|G|$, whenever the
characteristic of the field of coefficients does not divide $|G|$. To
do so, they essentially perform $|G|$ parallel smaller
computations. In particular for the \FGLM algorithm, this factor is
$|G|^2$, see~\cite[Theorem~10]{FaugereS2013}. Likewise,
in~\cite{Steidel2013}, the author proposed algorithms for computing
\gbs of symmetric ideals over the rationals or a finite field.

In~\cite{BenderFT2018,FaugereSS2014}, the authors show that if $\cC$
is a semi-group of $\Z^n$ containing $0$ and no pair of opposite
elements and if $f_1,\ldots,f_s$ are polynomials with support in
the corresponding monomial set
$\cT[\cC]\coloneqq
\curl{x_1^{i_1}\cdots x_n^{i_n}\middle|\allowbreak
  (i_1,\ldots,i_n)\in\cC}$, then one can consider the ideal
spanned by
$f_1,\ldots,f_s$ in the subalgebra of
polynomials with support in $\cT[\cC]$. Modifying classical \gbs
algorithms, they obtain a \sgb, a set of generators with support in $\cT[\cC]$
of this ideal that behaves like a \gb. This allows them to
speed \gb computations up by taking into account the sparsity of the
union of the supports of the original generators of the ideal.

\paragraph*{Main results}
We design variants of the \sFGLM algorithm which guess linear
recurrence relations for an $n$-dimensional table $\bseq$, given as
polynomials in $x_1,\ldots,x_n$. The original algorithm is recalled
in \autopageref{algo:sFGLM}.

We first prove that restraining the \sFGLM algorithm to terms of a
table lying on a cone makes it compute a \sgb of the ideal of relations
of the table. More precisely, we obtain Theorem~\ref{th:cone_sFGLM}, a
simplified version of which is as follows.

\begin{theorem}
  Let $\cC$ be a semi-subgroup of $\N^n$ containing $0$. Let $\prec$ be a
  monomial ordering. Let
  $T\subset\cT[\cC]$ be a finite set of monomials ordered for $\prec$,
  such that for all $\mu_1,\mu_2\in\cT[\cC]$, if $\mu_1\mu_2\in T$,
  then $\mu_1$ and $\mu_2$ are in $T$.

  Let $\bseq$ be a $n$-dimensional table with nonzero elements
  $\seq_{i_1,\ldots,i_n}$ only if $(i_1,\ldots,i_n)\in\cC$.

  Then, if $T$ is large enough, the output of the \sFGLM called on
  $\bseq$, $T$ and $\prec$ is the reduced \sgb of the ideal of
  relations of $\bseq$ with support in $\cT[\cC]$.
\end{theorem}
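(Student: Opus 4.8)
The plan is to replay, inside the subalgebra $A\coloneqq\K[\cT[\cC]]=\bigoplus_{\alpha\in\cC}\K\,\bx^\alpha$ of $\K[\bx]$, the correctness proof of the \sFGLM algorithm. This is a subalgebra precisely because $\cC$ is a semi-subgroup of $\N^n$ containing $0$, so $\cT[\cC]$ is a multiplicative monoid of monomials with $1\in\cT[\cC]$; accordingly ``divisibility'', ``leading monomial'', ``staircase'' and ``reduction'' are to be read in the $\cC$-sense of~\cite{BenderFT2018,FaugereSS2014}, i.e. $\mu_1\mid_\cC\mu_2$ means $\mu_1,\mu_2\in\cT[\cC]$ and $\mu_2/\mu_1\in\cT[\cC]$. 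I would first fix notation: let $\ell\colon A\to\K$ be the linear form $\bx^\alpha\mapsto\seq_\alpha$, let $J\coloneqq\{\,P\in A\mid \ell(\bx^\gamma P)=0\text{ for all }\gamma\in\cC\,\}$ be the ideal of relations of $\bseq$ with support in $\cT[\cC]$ (it is an ideal of $A$ since shifting an element of $A$ by $\bx^\gamma$, $\gamma\in\cC$, keeps its exponents in $\cC$), and let $S^{\ast}\coloneqq\Staircase(J)$ be the $\cC$-staircase, i.e. the monomials of $\cT[\cC]$ not $\cC$-divisible by $\LM(P)$ for any $P\in J\setminus\{0\}$. Under the hypotheses of Theorem~\ref{th:cone_sFGLM} — which are what ``$T$ large enough'' abbreviates — $S^{\ast}$ is finite and $J$ admits a finite reduced \sgb whose leading monomials are the $\cC$-minimal monomials outside $S^{\ast}$. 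Since by definition $\ell(PP')=0$ whenever $P\in J$ and $P'\in A$, the symmetric bilinear form $B(P,P')\coloneqq\ell(PP')$ on $A$ has radical exactly $J$, hence descends to a \emph{nondegenerate} form on $A/J$; this Artinian--Gorenstein feature is what lets the (a priori infinitely many) conditions defining $J$ be tested on the finite set $S^{\ast}$.

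Next I would set up the ``ideal-theoretic model'' that the finite computation must reproduce: the infinite multi-Hankel matrix $\cH=(\seq_{\mu\nu})_{\mu,\nu\in\cT[\cC]}$. It is symmetric; all its entries are values $\seq_\alpha$ with $\alpha\in\cC$ — this is where the hypothesis that $\bseq$ vanishes off $\cC$ is used, as restricting the \sFGLM to $T\subseteq\cT[\cC]$ then queries only the nonzero part of $\bseq$, which is the promised saving; and a polynomial $P\in A$, read as a finitely supported column, lies in $\ker\cH$ exactly when $P\in J$. From symmetry, a column dependency $\sum_{\nu\preceq\mu}c_\nu\,\mathrm{col}_\nu=0$ with $c_\mu\ne0$ is the same relation among the rows, and $J$ contains an element with leading monomial $\mu$ iff $\mathrm{row}_\mu$ lies in the span of the rows indexed by $\{\nu\prec\mu\}$; hence the greedy staircase built from $\cH$ is exactly $S^{\ast}$, and dually the rows indexed by $S^{\ast}$ span the whole row space, so $J=\ker\cH$ is cut out by the finitely many equations $\ell(\bx^\sigma P)=0$, $\sigma\in S^{\ast}$.

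The core step is that, once $T$ is large enough, running the \sFGLM on $\cH_{T,T}=(\seq_{\mu\nu})_{\mu,\nu\in T}$ reproduces this model. I would make ``large enough'' explicit as: $T$ is a finite subset of $\cT[\cC]$ meeting the stated staircase condition, containing $S^{\ast}$ and the $\cC$-border of $S^{\ast}$, and closed under the normal-form reductions modulo the reduced \sgb of $J$ (so every monomial produced while reducing a monomial of $T$ still lies in $T$); working out the exact combinatorial condition and checking monotonicity — any larger such $T$ still works — is part of the job. Then (i) the staircase $S$ returned equals $S^{\ast}$: $\cH_{S^{\ast},S^{\ast}}$ is the Gram matrix of $B$ on $A/J$, hence invertible, giving $S\supseteq S^{\ast}$; and for $\mu\in T\setminus S^{\ast}$ one takes the reduced \sgb element $g$ with $\LM(g)\mid_\cC\mu$, notes that $(\mu/\LM(g))\,g\in J$ lies in $A$ with all its monomials in $T$, and feeds it into the $\prec$-induction above to see that $\mathrm{row}_\mu$ of $\cH_{T,T}$ is already a combination of the rows indexed by $S$, so $\mu\notin S$. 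Next (ii) the fallback relations are the reduced \sgb of $J$: for each $\cC$-minimal $\mu\notin S^{\ast}$ the algorithm returns $R_\mu=\bx^\mu-\sum_{\nu\in S^{\ast}}c^{(\mu)}_\nu\bx^\nu$ with $\ell(\bx^w R_\mu)=0$ for all $w\in T\supseteq S^{\ast}$, and vanishing over $S^{\ast}$ alone already forces $R_\mu\in J$; the $R_\mu$ have pairwise $\cC$-incomparable, $\cC$-generating leading monomials and tails supported on $S^{\ast}$, so they form the reduced \sgb of the ideal they span, which lies in $J$ and whose quotient has dimension $|S^{\ast}|=\dim_\K A/J$, hence equals $J$. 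Finally the algorithm can only reduce $R_{\mu'}$ by $R_\mu$ when $\mu\mid_\cC\mu'$ — otherwise $(\mu'/\mu)R_\mu$ would leave $A$ — so its inter-reduction realises $\cC$-divisibility and its output is exactly the reduced \sgb of $J$.

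I expect the main obstacle to be this last block: certifying that the finite-dimensional linear algebra on $\cH_{T,T}$ faithfully mirrors $J$. The delicate point is that an individual fallback relation $R_\mu$ is, a priori, known only to vanish against the shifts by $T$; upgrading this to vanishing against all shifts by $\cC$ cannot be done one relation at a time (that argument is circular) and instead rests on the symmetry of $\cH$, on the Artinian--Gorenstein nondegeneracy of $B$ on $A/J$ — which is what reduces the test to the finite set $S^{\ast}$ — and on the combinatorics of $\cC$-divisibility ensuring that $\LM(J)$'s complement in $\cT[\cC]$ has a finite $\cC$-border whose leading monomials $\cC$-generate $\LM(J)$. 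Pinning down ``$T$ large enough'' tightly enough to make the $\prec$-induction of step (i) go through — in particular the closure of $T$ under the \sgb reductions — and checking monotonicity, is where most of the care goes; everything else is a transcription, to the semigroup algebra $A$ and the \sgb framework, of the correctness proof of the \sFGLM recalled in \autopageref{algo:sFGLM}.
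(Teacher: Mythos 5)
Your proof is correct, and its skeleton matches the paper's proof of Theorem~\ref{th:cone_sFGLM}: restrict to the subalgebra $\K[\cC]$, observe that kernel vectors of $H_{T,T}$ with $T\subset\cT[\cC]$ automatically have support in $\cT[\cC]$, prove the two inclusions between the computed and the true staircases, and recover each relation by solving $H_{S,S}\bgamma+H_{S,\curl{m}}=0$. The one step where you genuinely diverge is the full-rank statement for $H_{S^{\ast},S^{\ast}}$: the paper argues by contradiction, taking the $\prec$-smallest monomial $m\notin S$ whose column increases the rank and showing that $m$ would then have to lie in the staircase; you instead identify $H_{S^{\ast},S^{\ast}}$ as the Gram matrix, in the monomial basis $S^{\ast}$ of $\K[\cC]/J$, of the form $B(P,P')=\brac{PP'}_{\bseq}$, whose radical is exactly $J$ and which is therefore nondegenerate on the quotient. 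Your route is structurally cleaner and buys something the paper's argument leaves implicit: it explains at once why membership in $J$ can be tested against shifts by $S^{\ast}$ only, which is precisely the point you single out as the crux for upgrading the fallback relations $R_\mu$ from ``vanishing against shifts by $T$'' to ``lying in $J$''. The price is that you must invoke the fact that the $\cC$-staircase monomials form a $\K$-basis of $\K[\cC]/J$, which holds for \sgbs but deserves an explicit appeal to~\cite{FaugereSS2014,BenderFT2018}. Two minor remarks: your closure conditions on $T$ are heavier than necessary --- once $S^{\ast}\subseteq T$ and $\LM(\cG)\subseteq T$, the reduced relation $\mu-\NormalForm(\mu)$ already has support in $\curl{\mu}\cup S^{\ast}\subseteq T$, so there is no need to pass through $(\mu/\LM(g))\,g$ nor to require closure under reductions; and, as you correctly note, the hypothesis that $\bseq$ vanishes off $\cC$ plays no role in the correctness of the output as a \sgb in $\K[\cC]$ (the paper's Theorem~\ref{th:cone_sFGLM} indeed drops it), it only governs which table entries are queried and which fake relations are avoided.
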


Let us remark that this allows us to remove trivial constraints on the
relations
induced by the zero terms outside of the cone,
yielding in practice many fewer guessed relations
that eventually fail.  On the one hand, as a byproduct, this allows us to reduce the
number of table queries to guess the relations. For instance, for a
subtable of the
Gessel walk, using $3\,491$ table terms, we can guess $142$ relations
amongst which $136$ are fake and
only $6$ are correct.
On the other hand taking only
table terms in a cone allows us to consider table terms much further,
which in turn allow us to guess more relations. Indeed, with
$3\,010$ terms in a cone of the same table, we guess $21$ relations
and all of them are
correct. We refer to Table~\ref{tab:Prel} for more details. Let us
also notice that these fake relations may hide correct ones as their
leading monomials could divide the leading monomials of correct relations.

In the next theorem, we now  consider table terms lying on a
lattice $\Lambda$ and affine translates thereof. 
This allows us to design
a parallel variant of the \sFGLM algorithm, called the \lsFGLM
algorithm and given in \autopageref{algo:lattice_sFGLM}.
Assuming the fundamental domain of
$\Lambda$ has $L$ integer points, this variant essentially deals with
$L$ sets of table terms of sizes roughly divided by $L$. The following
theorem is a simplified version of Theorem~\ref{th:lattice_sFGLM}.

\begin{theorem}
  Let $\Lambda$ be a sublattice of $\Z^n$. Let $\prec$ be a monomial
  ordering. Let $T\subset\cT$ be a finite set of monomials ordered for
  $\prec$, such that for all $\mu_1,\mu_2\in\cT$, if $\mu_1\mu_2\in
  T$, then $\mu_1$ and $\mu_2$ are in $T$.

  Let $f_1,\ldots,f_s$ be polynomials spanning a zero-dimensional
  ideal $I$ of degree $D$ such that for all $1\leq i\leq
  s$, there exists $\ba\in\Z^n$ such that the support of $f_i$ is
  included in
  $\cT[\ba+\Lambda]\coloneqq\curl{x_1^{i_1}\cdots x_n^{i_n}\middle|\allowbreak
  (i_1,\ldots,i_n)\in\ba+\Lambda}$.
  
  Let $\bseq$ be a $n$-dimensional generic table whose ideal of relations is
  $I$.

  Then, if $T$ is large enough, then the output of the \lsFGLM called on
  $\bseq$, $T$ and $\prec$ is the reduced \gb of the ideal of
  relations of $\bseq$. Furthermore, each polynomial in this \gb has
  its support in a set $\cT[\ba+\Lambda]$.
\end{theorem}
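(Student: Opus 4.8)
\noindent\emph{Proof sketch.}\ The plan is to exploit the grading of $R=\K[x_1,\dots,x_n]$ by the finite abelian group $G=\Z^n/\Lambda$. Writing $R=\bigoplus_{g\in G}R_g$, where $R_g$ is spanned by the monomials whose exponent vector lies in the coset $g$, one has $R_gR_h\subseteq R_{g+h}$, and $R_{\bar 0}=\bigoplus_{\gamma\in\Lambda\cap\N^n}\K\,x^\gamma$ is a subalgebra. The hypothesis that for each $i$ the support of $f_i$ lies in some $\cT[\ba+\Lambda]$ says exactly that $f_i$ is $G$-homogeneous, so $I=\langle f_1,\dots,f_s\rangle$ is $G$-homogeneous: $I=\bigoplus_{g}I_g$ with $I_g=I\cap R_g$, whence $R/I=\bigoplus_g R_g/I_g$ and $D=\sum_g D_g$ with $D_g=\dim_\K R_g/I_g$. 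Since the ideal of relations of $\bseq$ is $I$ by assumption, the ``furthermore'' part reduces to the classical fact that the reduced \gb of a group-graded ideal consists of group-homogeneous elements, which I would prove by tracking $G$-homogeneity through Buchberger's algorithm: for $G$-homogeneous $f,f'$ the $S$-polynomial is $G$-homogeneous because the cofactors $\lcm(\LT(f),\LT(f'))/\LT(f)$ and $\lcm(\LT(f),\LT(f'))/\LT(f')$ have complementary $G$-degrees, and both reduction by $G$-homogeneous polynomials and the final inter-reduction preserve $G$-homogeneity. Hence every element of the reduced \gb $\mathcal{G}$ of $I$ has support in a single $\cT[\ba+\Lambda]$.

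It then remains to show that, for $T$ large enough, \lsFGLM outputs $\mathcal{G}$, which I would do coset by coset. The algorithm splits $T$ into the pieces $T_g=T\cap\cT[g]$; since $T$ is divisor-closed, so is each $T_g$ inside the monoid $\cT[g]$, and processing its monomials only reads table entries whose index lies in one coset of $\Z^n/\Lambda$ — which is exactly why the $L=[\Z^n:\Lambda]$ sub-computations split the queried terms into $L$ groups of size roughly $|T|/L$. The crucial point is that, for a generic table $\bseq$ with relation ideal $I$, the linear dependencies \lsFGLM detects among the shifted-table vectors of the monomials of $\cT[g]$ agree with dependence modulo $I_g$. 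Letting $\ell\colon R/I\to\K$ be the functional induced by $\bseq$ — so that ``$\bseq$ has relation ideal $I$'' is the non-degeneracy of $(p,q)\mapsto\ell(pq)$ on $R/I$, a form pairing $R_g/I_g$ with $R_h/I_h$ through its component on $R_{g+h}/I_{g+h}$ — I would show that once $T_g$ contains the $\prec$-staircase of $I_g$ together with its border, the multi-Hankel block with columns indexed by the monomials of $\cT[g]\cap T$ and rows indexed by the shifts in $\Lambda\cap\N^n$ has rank $D_g$ and kernel encoding exactly $I_g$. Granting this, the coset-$g$ run of \lsFGLM proceeds just as a correct \sFGLM run and returns the elements of $\mathcal{G}$ whose leading monomial lies in $g$; aggregating over $g\in G$ recovers $\mathcal{G}$, which by the first paragraph is the reduced \gb of the ideal of relations of $\bseq$, each element supported in a single $\cT[\ba+\Lambda]$.

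The step I expect to be the main obstacle is that rank-and-kernel claim, i.e.\ matching the per-coset linear algebra of \lsFGLM with a genuine \sFGLM computation, and it has two delicate parts. First, a reindexing issue: restricting indices, shifts and monomials to a coset of $\Lambda$ need not turn $\Lambda\cap\N^n$ into a free commutative monoid, so one cannot naively declare ``$\bseq$ restricted to coset $g$ is an $n$-dimensional table''; I would handle this by quantifying ``$T$ large enough'' so that each $T_g$ already exceeds the staircase and border of $I_g$ and only a finite, compatibly embeddable portion of $\bseq$ is read, or, more structurally, by a suitable choice of lattice basis and coset representatives. Second, and genuinely the crux, one must rule out that restricting shifts to the single coset $\Lambda\cap\N^n$ loses relations: testing a candidate $p\in R_g$ against all shifts in $\N^n$ a priori probes every coset $g+\bar\gamma$, whereas the algorithm probes only $g$, and plain non-degeneracy of $\ell$ supplies a useful shift in \emph{some} coset, not necessarily $\bar 0$. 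Genericity of $\bseq$ is what repairs this: for generic $\ell$ the component $\ell_g$ does not vanish on the $R_{\bar 0}$-submodule generated by $\bar p$ in $R_g/I_g$ whenever $\bar p\neq 0$, so no extra, coset-mixing fake relation survives. I would isolate this as a lemma — for a generic Artinian Gorenstein functional compatible with the $G$-grading, every graded component is non-degenerate in the relevant sense — prove it by a dimension count on the locus of functionals at which some graded component degenerates, and feed it into a coset-by-coset application of the known correctness of the \sFGLM algorithm.
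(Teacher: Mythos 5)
Your proposal follows essentially the same route as the paper: establish that the $\Z^n/\Lambda$-grading makes $I$ and its reduced \gb $G$-homogeneous (hence the support property), run the linear algebra coset by coset, and invoke genericity of the table to ensure that the restricted per-coset Hankel blocks have full rank — which is exactly the crux the paper also resolves by a Zariski-openness argument (its proof is no less terse there than your sketch). The only discrepancy is in your description of the blocks: the algorithm uses $H_{\{1\}\cup T_{\ba},\{1\}\cup T_{\ba}}$ with rows \emph{and} columns indexed by the coset $\ba$ (plus the monomial $1$, which is needed to avoid missing cosets entirely), rather than rows indexed by $\Lambda\cap\N^n$; the genericity argument is unaffected.
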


Finally, we also make an adaptive variant of the \lsFGLM algorithm,
following what has been done for the \sFGLM. This adaptive variant,
given in \autopageref{algo:lattice_asFGLM},
aim
at reducing the number of table queries using the shape of the
staircase associated to the \gb of the ideal of relations of the table.

\paragraph*{Structure of the paper}
We first recall in Section~\ref{s:prelim} the classical connection
between linear recurrence relations with polynomial coefficients and
skew-polynomials in $2n$ variables. Then, we recall how using linear
algebra routines on a special kind of matrix, a \emph{multi-Hankel}
one, the \sFGLM algorithm, and
its adaptive variant the \asFGLM algorithm, guesses linear recurrence
relations.

In Section~\ref{s:guessing}, we design variants of the \sFGLM
algorithm that take the table structure into account for guessing
linear recurrence relations, then we prove
Theorems~\ref{th:cone_sFGLM} and~\ref{th:lattice_sFGLM}.
As an application, we provide a
modification of the \spFGLM algorithm~\cite{FaugereM2011,FaugereM2017}
whenever the ideal is globally invariant under the action of a finite group.

The same kind of variants of the \asFGLM algorithm are then designed,
in Section~\ref{s:adaptive}. Likewise, we prove
Theorem~\ref{th:lasFGLM} in this section. Then, we show how one
can perform skew-polynomial 
operations in order to preserve the cone and lattice structures of the
support of the polynomials.

Finally, in Section~\ref{s:bench}, we report on our speedup using our
\texttt{C} implementation of the \spFGLM algorithm when the ideal is
invariant under the action of a finite group. We also guess linear
recurrence relations satisfied by $n$D-space walks with and without 
exploiting the cone 
structure of the table and then test further the guessed relations. We
then report on how the cone structure allows us to guess fewer fake
linear recurrence relations.


\section{Preliminaries}\label{s:prelim}
\subsection{Tables and relations}
In all this paper, we take the convention that $0\in\N$. For $n\in\N$, $n\geq 1$, we let
$\bi=(i_1,\ldots,i_n)\in\N^n$, $\bx=(x_1,\ldots,x_n)$ and
$\bx^{\bi}=x_1^{i_1} \cdots x_n^{i_n}$.  For a subset $\cS$ of
$\N^n$, we let $\cT[\cS]=\curl{\bx^{\bs}\middle|\bs\in\cS}$ be the set
of monomials with exponents in $\cS$.  To ease the presentation, we
let $\cT\coloneqq\cT[\N^n]$. Finally, for a polynomial
$f=\sum_{\bs\in\cS}f_{\bs}\bx^{\bs}$, we let
$\supp f=\curl{\bs\in\cS\middle|f_{\bs}\neq 0}$ be its support.

Let $\K$ be a field
and $\bseq\in\K^{\N^n}$ be a $n$-indexed sequence with values
in $\K$, that is $\bseq=(\seq_{i_1,\ldots,i_n})_{(i_1,\ldots,i_n)\in\N^n}$.
There is a natural correspondence between finite linear combinations of terms of
$\bseq$ and polynomials in $\K[x_1,\ldots,x_n]$. For
$g=\sum_{\bs\in\cS}\gamma_{\bs}\bx^{\bs}$, with $\cS$ a finite subset of $\N^n$,
we can write $\brac{g}_{\bseq}\coloneqq\sum_{\bs\in\cS}\gamma_{\bs}\seq_{\bs}$.
Hence shifting a relation by an index $\bi$ comes down to
multiplying the corresponding polynomial by $\bx^{\bi}$ since
\[\brac{g\bx^{\bi}}_{\bseq}=\sum_{\bs\in\cS}\gamma_{\bs}\seq_{\bs+\bi}.\]

In particular, a polynomial $g$ defines a \emph{linear recurrence relation
with constant coefficients}, or \emph{\Crel} for short, on $\bseq$ if, and only if,
for all $\bi\in\N^n$,
$\brac{g\bx^{\bi}}_{\bseq}=0$.
The set of
all such polynomials is an ideal of $\K[\bx]$ called the \emph{ideal of \Crels
  of $\bseq$}, see for instance~\cite[Definition~2 and Proposition~4]{BerthomieuBF2017}.

Finally, a nonzero sequence $\bseq$ is said to be \emph{\Cfin} if
together with a finite number of terms of $\bseq$ and a finite number of
\Crels, one can recover all the terms of
$\bseq$. This is equivalent to requiring that
the ideal of \Crels of
$\bseq$ is $0$-dimensional, see also~\cite[Definition~2 and
Proposition~3]{BerthomieuBF2015}, where such sequences are called
\emph{linear recursive}.

\begin{example}\label{ex:init}
  On the one hand, the terms
  $v_{i,j}=(5+4i+3j) 2^{i+j}+(3+6i+j) 5^{i+j}$
  of $\bseq\in\F_7^{\N^2}$ 
  can all be computed
  thanks to $\seq_{0,0}=\seq_{0,1}=\seq_{0,2}=1$, $\seq_{1,0}=0$ and
  the \Crels, for all $(i,j)\in\N^2$,
  \[
    \seq_{i+1,j+1}+3\seq_{i,j}
    =\seq_{i+2,j}+\seq_{i,j+2}+6\seq_{i,j}
    =\seq_{i,j+3}+4\seq_{i+1,j}+6\seq_{i,j+1}=0.
  \]
  On the other hand, they can also be computed knowing
  $\seq_{0,0}=\seq_{0,1}=\seq_{0,2}\allowbreak=\seq_{0,3}=1$ and that
  for all $(i,j)\in\N^2$,
  \[
  \seq_{i,j+4}+6\seq_{i,j+2}+2\seq_{i,j}
  =\seq_{i+1,j}+2\seq_{i,j+3}+5\seq_{i,j+1}=0.
  \]
  Thus, the ideal of \Crels of $\bseq$ is the $0$-dimensional one
  $\ideal{x y+3,x^2+y^2+6,y^3+4x+6y}
  =\ideal{y^4+6y^2+2,x+2y^3+5y}$ and $\bseq$ is \Cfin.

  On the other hand, the binomial sequence,
  $\bbin=\pare{\bin_{i,j}}_{(i,j)\in\N^2}=\pare{\binom{i}{j}}_{(i,j)\in\N^2}$,
  satisfies Pascal's rule: for all $(i,j)\in\N^2$,
  $\bin_{i+1,j+1}-\bin_{i,j+1}-\bin_{i,j}=0$. Moreover, one can 
  show that this relation spans all the other \Crels, \ie
  its ideal of \Crels is the $1$-dimensional one $\ideal{x y-y-1}$,
  thus $\bbin$ is not \Cfin.
\end{example}

Furthermore, some sequences satisfy \emph{linear recurrence relations with
coefficients that are polynomials} in the indices of the sequence, or
\emph{\Prels} for short. For
instance, the binomial sequence satisfies the following two \Prels for
all $(i,j)\in\N^2$: 
\begin{align*}
  \phantom{i-}\pare{j+1}\bin_{i,j+1}-\pare{i-j}\bin_{i,j}&=0\\
  \pare{i+1-j}\bin_{i+1,j}-\pare{i+1}\bin_{i,j}&=0.
\end{align*}
Combining them by shifting the former by index $(0,1)$ and then
adding the latter yields
\[(i-j)\bin_{i+1,j+1}-(i-j)\bin_{i,j+1}-(i-j)\bin_{i,j}=0.\]
This
proves that Pascal's rule holds whenever $i\neq j$.

We thus aim at representing the former relations as polynomials $g_1$
and $g_2$ such that for all $(i,j)\in\N^2$,
$\brac{g_1x^i y^j}_{\bseq}=\brac{g_2x^i y^j}_{\bseq}=0$.  For instance, we could say
that the first one corresponds to
$\brac{(j+1)x^i y^{j+1}-(i-j)x^i y^j}_{\bseq}=\brac{\pare{(j+1)y-(i-j)}x^i y^j}_{\bseq}=0$,
but this would mean that $g_1$ has coefficients in $i$ and $j$, which
are meaningless on their own.  To circumvent this,
in~\cite{BerthomieuF2016}, the authors introduced new
variables $\bt=(t_1,\ldots,t_n)$, such that $t_p$ behaves like
$x_p\partial_p$, where $\partial_p$ is the differential operator with
respect to $x_p$. That is,
$\brac{\bt^{\bk}\bx^{\bi}}_{\bseq}\coloneqq
\brac{t_1^{k_1}\cdots t_n^{k_n}\bx^{\bi}}_{\bseq}
=\brac{(x_1\partial)^{k_1}\cdots(x_n\partial_n)^{k_n}\bx^{\bi}}_{\bseq}
=\brac{i_1^{k_1}\cdots i_n^{k_n}\bx^{\bi}}_{\bseq}=i_1^{k_1}\cdots
i_n^{k_n}\seq_{\bi}=\bi^{\bk}\seq_{\bi}$. Then, the $\brac{.}_{\bseq}$
notation is naturally $\K$-linearly extended to polynomials in $\bt$
and $\bx$.  Therefore, the $2n$ variables
$t_1,\ldots,t_n,x_1,\ldots,x_n$ follow, for all $1\leq p,q\leq n$
and $p\neq q$, the commutation rules $x_p x_q=x_q x_p$,
$t_p t_q=t_q t_p$, $t_p x_q=x_q t_p$ and
$t_p x_p=x_p(t_p+1)$, making polynomials in $\bt$ and $\bx$
\emph{quasi-commutative}. The ring of skew-polynomials in $\bt$ and
$\bx$, satisfying the quasi-commutative rules defined above, will be
denoted $\K[\bt]\ideal{\bx}$ while the ring of
skew-polynomials in $\bx$ with coefficients in $\K(\bt)$ will simply
be
denoted $\K(\bt)\ideal{\bx}$. Now, a \Prel is given by a finite subset
$\cS$ of $\N^n$ and polynomials $\gamma_{\bs}\in\K[\bt]$ for
$\bs\in\cS$, such that
\[\forall\bi\in\N^n,
\sum_{\bs\in\cS}\gamma_{\bs}(\bs+\bi)\seq_{\bs+\bi}=0.\]
This relation corresponds to the polynomial
$g=\sum_{\bs\in\cS}\gamma_{\bs}(\bt)\bx^{\bs}\in\K[\bt]\ideal{\bx}$ such that for
all $\bi\in\N^n$, $\brac{g\bx^{\bi}}_{\bseq}=0$.

\begin{remark}
  While we can obviously find polynomials
  $\tilde{\gamma}_{\bs}\in\K[\bt]$ such that
  $\sum_{\bs\in\cS}\tilde{\gamma}_{\bs}(\bi)\seq_{\bs+\bi}
  =\sum_{\bs\in\cS}\gamma_{\bs}(\bs+\bi)\seq_{\bs+\bi}
  =\brac{g\bx^{\bi}}_{\bseq}=0$,
  the notation with the
  $\gamma_{\bs}(\bs+\bi)$'s makes more explicit the
  corresponding polynomial $g$ in $\K[\bt]\ideal{\bx}$.
\end{remark}
\begin{example}\label{ex:binom_ideal}
  Let $t=t_1$, $u=t_2$, $x=x_1$ and $y=x_2$. Then, the \Prels
  satisfied by the binomial sequence can be rewritten as
  \begin{align*}
    \phantom{i-}\pare{j+1}\bin_{i,j+1}-\pare{i-j}\bin_{i,j}
    &=\brac{\pare{j+1}x^iy^{j+1}-\pare{i-j}x^iy^j}_{\bseq}\\
    0 &=\brac{ux^iy^{j+1}-\pare{t-u}x^iy^j}_{\bseq}\\
    0 &=\brac{\pare{u y-\pare{t-u}}x^iy^j}_{\bseq}\\
    \text{and\qquad}
    \pare{i+1-j}\bin_{i+1,j}-\pare{i+1}\bin_{i,j}
    &=\brac{\pare{i+1-j}x^{i+1}y^j-\pare{i+1}x^iy^j}_{\bseq}\\
    0 &=\brac{\pare{t-u}x^{i+1}y^j-\pare{t+1}x^iy^j}_{\bseq}\\
    0 &=\brac{\pare{\pare{t-u}x-\pare{t+1}}x^iy^j}_{\bseq}.
  \end{align*}
  Thus, $g_1=u y-\pare{t-u}$ and $g_2=\pare{t-u}x-\pare{t+1}$ in $\K[t,u]\ideal{x,y}$.
\end{example}
The set of all such
polynomials is a right ideal of $\K[\bt]\ideal{\bx}$. Indeed, it is
stable by multiplication on the right by any monomial $\bx^{\bi}$ as
requested. Furthermore, since
$\bt^{\bl}\bx^{\bj}\bt^{\bk}\bx^{\bi}
=\bt^{\bl}\pare{\bt-\bj}^{\bk}\bx^{\bj+\bi}$,
$\brac{\bt^{\bl}\bx^{\bj}\bt^{\bk}\bx^{\bi}}_{\bseq}
=\brac{\bt^{\bl}\pare{\bt-\bj}^{\bk}\bx^{\bj+\bi}}_{\bseq}
=\pare{\bj+\bi}^{\bl}\bi^{\bk}\seq_{\bj+\bi}
=\bi^{\bk}\brac{\bt^{\bl}\bx^{\bj+\bi}}_{\bseq}$. In other words,
multiplying on the right by $\bt^{\bk}\bx^{\bi}$ corresponds to
multiplying on the right by $\bx^{\bi}$ and to multiply the evaluation by a
constant, namely $\bi^{\bk}$. Thus if $\brac{g\bx^{\bi}}_{\bseq}$ vanishes, then so does
$\brac{g\bt^{\bk}\bx^{\bi}}_{\bseq}$.

Such relations allow one to compute new terms of the sequence, though
integer roots of the leading coefficient may prevent some
computations. For instance, one cannot compute $\bin_{i+1,i+1}$ from
$\bin_{i,i+1}$ using $(i+1-j) \bin_{i+1,j}-(i+1)\bin_{i,j}$ and
$j=i+1$ as the
coefficient in front of $\bin_{i+1,j}$ vanishes. Thankfully, for this
sequence, one can use the other relation
$(j+1)\bin_{i,j+1}-(i-j)\bin_{i,j}$ with $i=j+1$ to achieve this goal.

Sequences satisfying \Prels form a large set. Among them, there are
the \emph{\Pfin} ones. In particular, analogously to the \Cfin case, a
nonzero sequence $\bseq$ such that
a finite number of its terms and a finite number of
\Prels allows one to recover all of its terms is \Pfin.
\begin{example}[Cont.\ of Example~\ref{ex:binom_ideal}]
  The ideal of \Prels of~$\bbin$ in $\K[t,u]\ideal{x,y}$ is
  \[\ideal{u y-(t-u),(t-u)x-(t+1),x y-y-1}.\]
  Furthermore, since
  \begin{align*}
    \pare{xy-y-1}\pare{t-u}
    &= \pare{t-u}xy - \pare{t+1-u}y - \pare{t-u}\\
    &= \pare{\pare{t-u}x - \pare{t+1}}y + \pare{u y - \pare{t-u}},
  \end{align*}
  in $\K(t,u)\ideal{x,y}$, its ideal of \Prels is only spanned by
  $u y-(t-u)$ and $(t-u)x-(t+1)$.
\end{example}
Note that \Pfin sequences are actually those whose generating series
are D-finite and there
exist \Pfin sequences that do not satisfy the above prerequisites.

\subsection{\gbs}
This section briefly recalls some basic definitions on Gr\"obner
bases. The interested reader will find more details
in~\cite{CoxLOS2015} in the commutative case
and~\cite[Chapter~2]{Levandovskyy2005} in the quasi-commutative one.

For $\cT$ the set of monomials in $\K[\bt]\ideal{\bx}$, a monomial ordering
$\prec$ on $\cT$ is a total order relation satisfying the following
three properties
\begin{enumerate}
\item $\forall m\in\cT$, $1\preceq m$;
\item $\forall m,m',s\in\cT$, $m\preceq m'\Rightarrow m s\preceq
  m' s$ and $s m\preceq s m'$.
\end{enumerate}

For a monomial ordering $\prec$ on $\K[\bt]\ideal{\bx}$, the \emph{leading
  monomial} of $f$, denoted $\LM(f)$, is the greatest monomial in the
support of $f$ for $\prec$. 
For an ideal $I$, we let
$\LM(I)=\{\LM(f),\ f\in I\}$.
We recall briefly the definition of a \gb and of its associated staircase.
\begin{definition}\label{def:staircase}
  Let $I$ be a nonzero ideal of $\K[\bt]\ideal{\bx}$ and let $\prec$ be
  a monomial ordering.
  A set $\cG\subseteq I$ is a \emph{\gb} of $I$ if for all $f\in I$,
  there exists $g\in\cG$ such that $\LM(g)|\LM(f)$,
  it is \emph{reduced} if for any $g,g'\in\cG$,
  and $g\neq g'$, any monomial $m\in\supp g'$ satisfies $\LM(g)\nmid m$.

  The \emph{staircase of $\cG$}
  is defined as
  $S=\Staircase(\cG)=\{s\in\cT,\ \forall g\in\cG, \LM(g)\nmid s\}$.

  More generally, a set $S$ will be said to be a \emph{staircase} if
  for two monomials $\mu_1$ and $\mu_2$ such that $\mu_1\mu_2\in S$,
  we have
  $\mu_1\in S$ and $\mu_2\in S$.
\end{definition}
Let us recall that $\Staircase(\cG)$ is also the canonical basis of
$\K[\bt]\ideal{\bx}/I$ as a $\K$-vector space.

\gb theory allows us to choose any monomial ordering,
among which we mainly use, on the $\bx$ variables, the
\begin{description}
\item[$\LEX(x_n\prec\cdots\prec x_1)$ ordering] which satisfies
  $\bx^{\bi}\prec\bx^{\bj}$ if, and only if, there
  exists $1\leq p\leq n$ such that for all $q<p$,
  $i_q=j_q$ and $i_p<j_p$,
  see~\cite[Chapter~2, Definition~3]{CoxLOS2015};
\item[$\DRL(x_n\prec\cdots\prec x_1)$ ordering] which satisfies
  $\bx^{\bi}\prec\bx^{\bj}$ if, and only if,
  $i_1+\cdots+i_n<j_1+\cdots+j_n$ or $i_1+\cdots+i_n=j_1+\cdots+j_n$
  and there exists
  $2\leq p\leq n$ such that for all $q>p$, $i_q=j_q$
  and $i_p>j_p$,
  see~\cite[Chapter~2, Definition~6]{CoxLOS2015}.
\end{description}
We will also use monomial orderings on the $\bt$ and $\bx$
variables. Since we want to freely switch from
$\K[\bt]\ideal{\bx}$ to $\K(\bt)\ideal{\bx}$ and vice versa, it makes
sense to choose an ordering such that $t_k\prec x_{\ell}$ for any $k$
and $\ell$, such as
$\LEX(t_n\prec\cdots\prec t_1\prec x_n\prec\cdots x_1)$ or
$\DRL(t_n\prec\cdots\prec t_1\prec\cdots\prec x_1)$. The latter is
more suitable as it allows us to enumerate all the monomials in $\bt$
and $\bx$ in increasing order.

\subsection{Structured \gbs}
The \emph{cones} we are
dealing with are those that are \emph{submonoids} of $\N^n$. These are
subsets $\cC$ of $\N^n$ such that
$0\in\cC$ and for all $\bi,\bj\in\cC$,
$\pare{\bi+\bj}\in\cC$.

Given such a cone $\cC$ and polynomials with support in its
associated set of monomials
$\cT[\cC]=\curl{\bx^{\bi}\in\cT\middle| \bi\in\cC}$, one may want to perform all
the polynomial operations with monomials in $\cT[\cC]$ in order to take advantage of
the structure of the support when computing a \gb of the ideal they
span. While, this is not always possible, one can achieve this goal by
considering the ideal the polynomials span in the subalgebra defined
by $\cC$.

This leads to the definition of \sgb with support in
$\cT[\cC]$ that uses its monoid structure.
\begin{definition}[{\cite[Definition~3.1]{FaugereSS2014}}
  and {\cite[Definition~3.3]{BenderFT2018}}]
  Let $\cC\subseteq\N^n$ be a cone and $\cT[\cC]$ be its associated
  set of monomials. Then, $\K[\cC]$, the set of polynomials with
  support in $\cT[\cC]$, is an algebra.

  Let $f_1,\ldots,f_s\subseteq\K[\cC]$ be
  polynomials. We let
  $I=\ideal{f_1,\ldots,f_s}_{\cC}=\curl{\sum_{k=1}^s f_k
    q_k\middle|q_1,\ldots,q_s\in\K[\cC]}$ be the 
  \emph{ideal spanned by $f_1,\ldots,f_s$ in $\K[\cC]$}.
  Then, a \emph{\sgb} of $I$ for a monomial ordering
  $\prec$ is a generating set $\cG=\curl{g_1,\ldots,g_r}\subseteq\K[\cC]$ such that
  for all
  $f\in I$,
  $\LM(f)=\LM(g)m$ for some $g\in\cG$ and
  $m\in\cT[\cC]$.

  The associated staircase $\Staircase(\cG)$ of $\cG$ is the set of
  monomials $s$ in $\cT[\cC]$ such that for any $g\in\cG$, there is no
  monomial $m\in\cT[\cC]$ such that $s=\LM(g)m$.
\end{definition}
Let us notice that for $\cC=\N^n$, $\K[\cC]=\K[\N^n]=\K[\bx]$ and \sgbs are classical
\gbs. Furthermore, like classical \gbs, \sgbs allow one to solve the
ideal membership problem in $\K[\cC]$ in an effective way.

For a \emph{lattice} $\Lambda\subseteq\Z^n$, we let $\Lambda_{\pos}=\Lambda\cap\N^n$
be its nonnegative cone, so that, naturally, $\Z^n_{\pos}=\N^n$. In
particular, $\Lambda$ and $\Lambda_{\pos}$ are cones and we
may intersect them with another cone. For
$\ba\in\Z^n$, we also denote by $\ba+\Lambda$ the affine lattice obtained by
translating $\Lambda$ by $\ba$ and likewise we can consider its intersection with a
cone. In particular, $(\ba+\Lambda)_{\pos}=(\ba+\Lambda)\cap\N^n$.

Given a lattice
$\Lambda$, its affine translates
$\ba_0+\Lambda=\Lambda,\ldots,\ba_L+\Lambda$ and polynomials
$f_1,\ldots,f_k$, each with 
supports in an associated set of monomials
$\cT[(\ba_{\ell}+\Lambda)_{\pos}]$, then a reduced \gb of
$\ideal{f_1,\ldots,f_k}$ satisfies also this support property. This
allows one to speed the \gbs computations up by essentially performing
$L$ computations in parallel with input of sizes divided by $L$.

\subsection{Multi-Hankel matrices}
Given a table $\bseq$ and a polynomial $g\in\K[\bt]\ideal{\bx}$, in
order to determine if a polynomial $g$ is in the ideal of \Prels of $\bseq$, one
must check that $\brac{g\bx^{\bi}}_{\bseq}=0$ for all $\bi$. As only a finite
number of terms of $\bseq$ are known, only a finite number of such
tests can be done.

\begin{definition}
  Let $T$ be a finite subset of $\cT[\N^{2 n}]$, the set of monomials
  in $t_1,\ldots,t_n,x_1,\ldots,x_n$, and $X$ be a finite subset of
  $\cT[\N^n]$, the set of monomials in $x_1,\ldots,x_n$.

  The \emph{multi-Hankel matrix $H_{X,T}$} is the matrix whose rows are
  indexed by $X$ and columns by $T$ and whose coefficient at row
  $\bx^{\bi}$ and column $\bt^{\bk}\bx^{\bj}$ is $\brac{\bt^{\bk}\bx^{\bj+\bi}}_{\bseq}$.
\end{definition}
A vector in the right kernel of this matrix corresponds to a polynomial
$g$ with support in $T$ such that $\brac{g m}_{\bseq}=0$ for all $m\in
X$.

\begin{example}
  Let $\bseq=(\seq_{i,j})_{(i,j)\in\N^2}$ be a table and
  $T=\curl{1,u,t,y,x,u y,t y,u x,t x}\subset\cT[\N^{2n}]$ and
  $X=\curl{1,y,x,y^2,x y,x^2}\subset\cT[\N^n]$ be two sets of monomials,
  then their multi-Hankel matrix is
  \[H_{X,T}=
    \kbordermatrix{
      \ 		&1	
      &u		&t		&y		&x
      &u y		&t y		&u x		&t x\\
      1		&\seq_{0,0}
      &0		&0		&\seq_{0,1}	&\seq_{1,0}
      &\seq_{0,1}	&0		&0		&\seq_{0,1}\\
      y		&\seq_{0,1}
      &\seq_{0,1}	&0		&\seq_{0,2}	&\seq_{1,1}
      &2\seq_{0,2}	&0	&\seq_{1,1}	&\seq_{1,1}\\
      x		&\seq_{1,0}
      &0		&\seq_{0,1}	&\seq_{1,1}	&\seq_{1,1}
      &\seq_{1,1}	&\seq_{1,1}	&0		&2\seq_{2,0}\\
      y^2	&\seq_{0,2}
      &2\seq_{0,2}	&0		&\seq_{0,3}	&\seq_{1,2}
      &3\seq_{0,3}	&0	&2\seq_{1,2}	&\seq_{1,2}\\
      x y	&\seq_{1,1}
      &\seq_{1,1}	&\seq_{1,1}	&\seq_{1,2}	&\seq_{2,1}
      &2\seq_{1,2}	&\seq_{1,2}	&\seq_{2,1}	&2\seq_{2,1}\\
      x^2		&\seq_{2,0}
      &0		&2\seq_{2,0}	&\seq_{2,1}	&\seq_{3,0}
      &\seq_{2,1}	&2\seq_{2,1}	&0		&3\seq_{3,0}\\
    }.
  \]
  We give some computation details. The coefficient on the third
  column ($t$) and first row ($1$) is
  $\brac{t\times 1}_{\bseq}=\brac{t x^0 y^0}_{\bseq}=0^1\seq_{0,0}=0$. Likewise, the
  coefficient on sixth column ($u y$) and the second to last row ($x y$)
  is $\brac{u y x y}_{\bseq}=\brac{u x y^2}_{\bseq}=2^1\seq_{1,2}=2\seq_{1,2}$.

  Note that rows are only indexed with monomials in $\bx$ and not in
  $\bt,\bx$ since the row labeled with $\bt^{\bk}\bx^{\bi}$,
  $\bk\neq 0$ would
  be a multiple of the row labeled with $\bx^{\bi}$.
\end{example}

\subsection{The \sFGLM algorithm}
The \sFGLM algorithm~\cite{BerthomieuBF2015,BerthomieuBF2017}, takes
as input the table $\bseq$ and a set of monomials $T$, which is a staircase,
and computes the right kernel of the multi-Hankel matrix
$H_{T,T}$. Vectors in this kernel can be seen as polynomials in
$\K[\bx]$ and these polynomials
with a leading term minimal for the partial order induced by the
division are the ones returned by the algorithm.
Furthermore, if $T$ is ordered for a monomial ordering $\prec$ and contains
the staircase and the leading monomials of the reduced \gb of the
ideal of \Crels of $\bseq$ for $\prec$, then the \sFGLM algorithm
returns this \gb.

As our goal is to extend the \sFGLM algorithm in order to deal with
table terms lying on a cone or a lattice, we recall this algorithm.
\begin{algorithm2e}[htbp!]
  \small
  \DontPrintSemicolon
  \TitleOfAlgo{\sFGLM\label{algo:sFGLM}}
  \KwIn{A table $\bseq=(\seq_{\bi})_{\bi\in\N^n}$ with coefficients in
    $\K$, a monomial ordering $\prec$, a sufficiently large staircase
    $T$
    and ordered
    for $\prec$.}
  \KwOut{A reduced \gb of the ideal of \Crels of $\bseq$.}
  Build the matrix $H_{T,T}$.\;
  Compute the set $S\subseteq T$ of smallest monomials, for $\prec$,
  such that $\rank H_{S,S}=\rank H_{T,T}$.\;
  \Forall(\tcp*[f]{stabilize $S$ for the division}){$m\in T\setminus S$}{
    \lIf{$\exists s\in S$ such that $m\mid s$}{
      $S\coloneqq S\cup\{m\}$.
    }
  }
  $L\coloneqq T\setminus S$ sorted for $\prec$.\;
  $G\coloneqq\varnothing$.\;
  \While{$L\neq\varnothing$}{
    $g\coloneqq\min_{\prec} L$\;
    Solve the linear system
    $H_{S,S}\bgamma+H_{S,\curl{g}}=0$.\;
    $G\coloneqq G\cup\curl{g+\sum_{s\in S}\gamma_s s}$.\;
    Remove $g$ and any of its multiples from $L$.
  }
  \KwRet $G$.
\end{algorithm2e}

The algorithm computes the column rank profile of the matrix
$H_{T,T}$, that is the set of leftmost linearly independent columns of
the matrix. Since these columns are independent from the previous
ones, their labels cannot be the leading monomial, for $\prec$,
of any polynomial in the ideal of \Crels, thus they are in the
associated staircase of the reduced \gb of this ideal for $\prec$. If
$T$ is not large enough, a monomial $m$ could be detected as not lying
in the staircase while one of its multiples does, hence there is a
stabilization process to add $m$ to the staircase if this
happens, see~\cite[Example~3]{BerthomieuBF2015}. Then, each output
polynomial is computed by solving a linear system involving its
leading monomial and the monomials in the staircase.

\begin{example}[Cont.\ of Example~\ref{ex:init}]\label{ex:sFGLM}
  Let us recall that a \gb of the
  ideal of \Crel of $\bseq$ is
  $\curl{x y+3,x^2+y^2+6,\allowbreak y^3+4x+6y}$
  for $\DRL(y\prec x)$,
  hence this ideal has degree~$4$.
  Therefore, the staircase of the \gb of this ideal for
  $\LEX (y\prec x)$, or any monomial ordering, can only contain monomials
  $x^i y^j$ with $(i+1)\times (j+1)\leq 4$ and it suffices to take
  $T=\curl{1,y,y^2,y^3,y^4,x,x y,x^2,x^3,x^4}$ to recover the staircase
  and the \gb. 
  The column rank profile of $H_{T,T}$ is given by
  $S=\curl{1,y,y^2,y^3}$ so that
  $L=\curl{y^4,x,x y,x^2,x^3,x^4}$. Then, the linear systems
  $H_{S,S}\bgamma+H_{S,\{y^4\}}=0$ and
  $H_{S,S}\bgamma+H_{S,\{x\}}=0$ yield the \gb
  \[\curl{y^4+6y^2+2,x+2y^3+5y}.\]
\end{example}
In many applications, for instance the \gbs change of orderings one
through the \spFGLM algorithm, the
computation of a single table element is costly. Therefore, we may
want to reduce the number of table queries performed by the \sFGLM
algorithm. 
Algorithm~\ref{algo:sFGLM} called on a set $T$ requires $\Card{2T}$ table terms, where
$2T$ is the Minkowski sum of $T$ with itself. To reduce this number of
queries, the
goal is to let the multi-Hankel grow step by step. We start with the
$1\times 1$ matrix
\[\kbordermatrix{
    \ 	&1\\
    1	&\brac{1}_{\bseq}
  }.\]
If $\brac{1}_{\bseq}=\seq_{\bzero}\neq 0$, then $1$ is in the associated
staircase of the \gb of the ideal of \Crels of $\bseq$, otherwise it
stops and returns the set of relations $\curl{1}$. The algorithm extends a full-rank
matrix $H_{S,S}$ into $H_{S\cup\{m\},S\cup\{m\}}$ with $m$ greater,
for $\prec$, than any monomial in $S$. Now, there are two
possibilities, either the new matrix has full rank or it is not and the
column labeled with $m$ is linearly dependent from the other ones. In
the former case, $m$ is actually in this staircase and $S$ is replaced
by $S\cup\{m\}$. In the latter case, a polynomial with support in
$S\cup\{m\}$ and leading monomial, for $\prec$, $m$ is found and no
multiples of $m$ will ever be proposed to extend the multi-Hankel
matrix. The algorithm stops either when no monomials can be added to
the staircase or when the size of the staircase has reached a threshold
given in input. There is, however, a possibility of finding wrong
relations if the first terms of the table exceptionally satisfies a
relation of smaller order, for instance if $\seq_{\bzero}=0$. This
problem can be circumvented by testing relations further, that is
adding a small buffer of constraints, \ie rows of the matrix. This can
be noticed for instance when the
relations are suspiciously small or in \FGLM
applications where the degree of the ideal is known in advance.


\section{Guessing with structures}\label{s:guessing}
In this section, we show how to guess linear recurrence relations of a
table by taking the structure of the table terms into account. We
first start with the case where only table terms in a cone are
considered. Then, we study how to guess these relations when table
terms are in a lattice or some affine translates thereof.

\subsection{Terms in a cone}\label{ss:guessing_cone}
In this subsection, we aim at describing how we can take advantage of
the structure of a given cone $\cC$ to recover the ideal of relations
of a table $\bseq$ by only considering table terms inside the cone. That is,
we aim at guessing polynomials $g\in\K[\cC]$ such that
for all
$\bx^{\bi}\in\cT[\cC]$, $\brac{g\bx^{\bi}}_{\bseq}=0$. This latter
condition is the guessing part as we will only be able to ensure that
$\brac{g\bx^{\bi}}_{\bseq}=0$ for all $\bx^{\bi}$ in a finite subset $T$ of $\cT[\cC]$.

To do so, two strategies are at our disposal and they both rely on the
generators of $\cC$ as a submonoid of $\N^n$. Let us denote by
$\ba_1,\ldots,\ba_{\nu}$ a set of generators of $\cC$, \ie for all
$\bi\in\cC$, there exists $\bj\in\N^{\nu}$ such that
$\bi=j_1\ba_1+\cdots+j_{\nu}\ba_{\nu}$. Then, note that, first and foremost, there is
no reason for $\nu$ to be less than or equal to $n$. Second, even if
$\nu$ is minimal and
$\ba_1,\ldots,\ba_{\nu}$ is a generating set, there is no reason for
$\pare{j_1,\ldots,j_{\nu}}$ to be unique.
\begin{example}\label{ex:cone}
  The cone $\cC=\curl{\bi\in\N^2\middle|
    i_1\leq 2i_2,i_2\leq 2i_1}$ is spanned by $\ba_1=(1,1)$,
  $\ba_2=(1,2)$ and $\ba_3=(2,1)$ so that
  $\cC=\curl{j_1\ba_1+j_2\ba_2+j_3\ba_3\middle|
    \ba_1
    =(1,1),\ba_2=(1,2),\ba_3=(2,1),(j_1,j_2,j_3)\in\N^3}$. Yet,
  we have the two decompositions $(3,3)=3\ba_1=\ba_2+\ba_3$.
\end{example}

The first strategy is designed to only consider table terms
lying in $\cC$. Assuming a generating set $\ba_1,\ldots,\ba_{\nu}$ of
$\cC$ is known, the set of monomials $\cT[\cC]$ can be defined as
\[\cT[\cC]=\curl{\bx^{j_1\ba_1}\cdots\bx^{j_{\nu}\ba_{\nu}}\middle|
    (j_1,\ldots,j_{\nu})\in\N^{\nu}}.\] The second strategy makes use
of a new set of variables $\by=(y_1,\ldots,y_{\nu})$, so that $y_1$
represents $\bx^{\ba_1}$, etc and an auxiliary table
$\bseqq=(\seqq_{\bj})_{\bj\in\N^{\nu}}$ defined by
$\seqq_{\bj}=\seq_{j_1\ba_1+\cdots+j_{\nu}\ba_{\nu}}$. Then, two
monomials $\by^{\bj}$ and $\by^{\bk}$ represent the same monomial
$\bx^{\bi}$ if, and only if,
$\bi=j_1\ba_1+\cdots+j_{\nu}\ba_{\nu}=k_1\ba_1+\cdots+k_{\nu}\ba_{\nu}$. This
implies that
both $\seqq_{\bj}$ and $\seqq_{\bk}$ are equal to
$\seq_{\bi}$. Thus, $\bseqq$ satisfies extra relations coming from
these multiple equivalent writings. They are given by binomials,
namely $\by^{\bj}-\by^{\bk}$. Hence, not all monomials in
$\cT[\N^{\nu}]$ are of interest and we clean them up by using the
binomial ideal $I(\cC)$ they span, for instance by reducing $\by^{\bj}$ to $\by^{\bk}$.

In practice, both strategies are equivalent. They only differ in how
they enumerate table terms $\seq_{\bi}$ with $\bi\in\cC$. Note,
though, that the second strategy requires
computing a \gb of $I(\cC)$, for instance
using~\cite{KoppenhagenM1999} while the first one only requires
checking that a monomial has already been generated. However, such a
\gb computation should not be the bottleneck compared to the
computations of the table terms or the linear algebra routines for the
guessing step.

Since the first strategy comes down to directly calling the \sFGLM algorithm
with a set of monomials 
$T\subset\cT[\cC]$, this yields Theorem~\ref{th:cone_sFGLM}.
\begin{theorem}\label{th:cone_sFGLM}
  Let $\cC$ be a submonoid cone of $\N^n$ spanned by the minimal set
  of generators $\curl{\ba_1,\ldots,\ba_{\nu}}$. Let $\prec$ be a
  monomial ordering on $\cT$, the set of monomials in $n$ variables,
  and let $T\subset\cT[\cC]$ be a staircase
  ordered for~$\prec$.

  Then, the \sFGLM algorithm called on table $\bseq$, $T$ and $\prec$ returns a
  set of polynomials $G$ with support in $\cT[\cC]$, such that for all
  $s\in T\setminus\ideal{\LM(G)}$, $s$ is in the associated
  staircase of a \sgb of the ideal of \Crels of $\bseq$ for $\prec$.

  Furthermore, if the ideal of \Crels of $\bseq$ is $0$-dimensional
  and has a reduced \sgb
  with support in $T$ for $\prec$, then the output of the \sFGLM algorithm called on
  $\bseq$ and $T$ is this reduced \sgb.
\end{theorem}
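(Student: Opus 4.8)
The plan is to reduce Theorem~\ref{th:cone_sFGLM} to the known correctness statement of the \sFGLM algorithm (recalled just before Example~\ref{ex:sFGLM}), by checking that the hypotheses on $\cC$ and $T$ guarantee that every intermediate object produced by the algorithm stays inside $\K[\cC]$, so that what the algorithm computes is literally a \sgb rather than an ordinary \gb. So I would first set up the dictionary: the \sFGLM algorithm, run on the table $\bseq$ and the staircase $T$, builds the multi-Hankel matrix $H_{T,T}$, extracts the column rank profile $S\subseteq T$, stabilizes $S$ under division, and for each $m\in T\setminus S$ that is a leading monomial solves $H_{S,S}\bgamma+H_{S,\{m\}}=0$ to produce a relation $m+\sum_{s\in S}\gamma_s s$. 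The only new thing relative to the ordinary setting is that all monomials involved lie in $\cT[\cC]$ rather than in all of $\cT$.

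First I would record the two closure facts that do the work. (i) Since $T\subseteq\cT[\cC]$, every column label of $H_{T,T}$ is in $\cT[\cC]$, hence $S\subseteq\cT[\cC]$; and the stabilization loop only adds to $S$ monomials $m\in T\setminus S\subseteq\cT[\cC]$, so the stabilized $S$ still lies in $\cT[\cC]$. Consequently each output polynomial $g=m+\sum_{s\in S}\gamma_s s$ has $\supp g\subseteq\cT[\cC]$, i.e.\ $g\in\K[\cC]$. This already gives the ``support in $\cT[\cC]$'' part of the conclusion. (ii) The hypothesis that $T$ is a staircase in the strong sense --- if $\mu_1\mu_2\in T$ with $\mu_1,\mu_2\in\cT[\cC]$ then $\mu_1,\mu_2\in T$ --- is exactly what the \sFGLM correctness proof needs, except that divisibility must be read \emph{inside the monoid} $\cC$: a monomial $\mu\in\cT[\cC]$ divides $s\in\cT[\cC]$ in $\K[\cC]$ iff $s=\mu m$ for some $m\in\cT[\cC]$. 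With this reading, the column-rank-profile argument goes through verbatim: if the column of $m\in T$ is linearly dependent on the columns of strictly smaller monomials of $T$, the dependency is a relation $g\in\K[\cC]$ with $\LM(g)=m$, so $m$ cannot lie in the staircase of any \sgb; conversely a monomial whose column is independent of all earlier ones cannot be a leading monomial of a relation supported on $T$, hence lies in the \sgb staircase (after the stabilization step corrects for $T$ being possibly too small, just as in~\cite[Example~3]{BerthomieuBF2015}). This is what yields the first displayed conclusion: every $s\in T\setminus\ideal{\LM(G)}$ is in the associated staircase of a \sgb of the ideal of \Crels of $\bseq$ for $\prec$.

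For the second, sharper statement, assume the ideal of \Crels is $0$-dimensional with reduced \sgb $\cG$ whose support lies in $\cT$ restricted to $T$ (i.e.\ every monomial appearing in some $g\in\cG$, in particular each $\LM(g)$ and each staircase monomial, is in $T$). Then $S$ stabilized equals the full \sgb staircase, $T\setminus S$ contains each $\LM(g)$, and for each such leading monomial the linear system $H_{S,S}\bgamma+H_{S,\{\LM(g)\}}=0$ has a unique solution (because $H_{S,S}$ is invertible, $S$ being a rank-profile of maximal size for a $0$-dimensional ideal of that degree), whose solution vector reconstructs precisely the unique relation in the ideal with leading monomial $\LM(g)$ and all lower monomials in the staircase --- that is, $g$ itself, since $\cG$ is reduced. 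Hence $G=\cG$. I would phrase this by invoking the already-stated correctness of the \sFGLM algorithm for the table $\bseq$ and pointing out that the only modification is that ``monomial ordering'', ``divisibility'', and ``staircase'' are all interpreted inside $\K[\cC]$, which is legitimate because $\cC$ is a submonoid of $\N^n$ and $\prec$ restricts to a monomial ordering on $\cT[\cC]$.

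**The main obstacle** I expect is not any single deep step but the bookkeeping around divisibility inside a monoid that need not be free: as Example~\ref{ex:cone} shows, a monomial of $\cT[\cC]$ can have several factorizations over the generators $\ba_1,\dots,\ba_\nu$, so one must be careful that ``$\LM(g)\mid s$ in $\K[\cC]$'' is the right notion in the definition of \sgb and its staircase, and that the stabilization loop and the kernel computation respect it. The cleanest way to dispatch this is to note that divisibility in $\K[\cC]$ is still just the relation $s/\mu\in\cT[\cC]$ (which is well defined as a statement even though a witnessing factorization over the $\ba_i$ need not be unique), that $T$ being a staircase in the stated strong sense is closed under this relation, and that linear dependence among columns of $H_{T,T}$ is insensitive to how we name the monomials; modulo these remarks the proof of~\cite[Theorem in \autopageref{algo:sFGLM} setting]{BerthomieuBF2017} transports without change.
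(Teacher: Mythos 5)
Your overall plan (run the algorithm on a staircase $T\subset\cT[\cC]$ and check that every object it produces stays in $\K[\cC]$) is the same as the paper's, and the support claim and your remarks on reading divisibility inside the monoid are fine. However, two steps do not hold up. First, you assert that a linear dependency of the column of $m$ on earlier columns of $H_{T,T}$ ``is a relation $g\in\K[\cC]$ with $\LM(g)=m$, so $m$ cannot lie in the staircase of any \sgb''. This is false: a kernel vector of the finite matrix only certifies $\brac{g\,t}_{\bseq}=0$ for the finitely many shifts $t\in T$ appearing as rows, not for all $t\in\cT[\cC]$; such guessed relations may be fake, which is precisely the phenomenon the paper quantifies. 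Accordingly the theorem claims only one inclusion (computed staircase inside the true one), and the paper obtains it by the contrapositive: if $m$ lies outside the true staircase, the \emph{genuine} relation $m-\NormalForm(m)$ vanishes under every shift, so the column of $m$ is dependent and neither $m$ nor any multiple of it survives. Your ``conversely'' clause instead deduces membership in the staircase from the non-existence of a relation supported on $T$, which is a non sequitur unless one already knows that every monomial outside the staircase is the leading monomial of a relation whose non-leading support lies in $T$ --- an assumption available only under the hypotheses of the second part.

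Second, and more importantly, the heart of the second statement is missing. You justify the final linear solves by saying $H_{S,S}$ is invertible ``$S$ being a rank-profile of maximal size for a $0$-dimensional ideal of that degree'' --- but that the true staircase $S$ is the column rank profile, equivalently that $H_{S,S}$ has full rank when only shifts inside the cone are available as rows, is exactly what has to be proved; the justification as given is circular. Nor can you simply ``invoke the already-stated correctness of the \sFGLM algorithm'': that statement concerns $\cC=\N^n$ and its proof uses shifts by arbitrary monomials, whereas the content of Theorem~\ref{th:cone_sFGLM} is that shifts restricted to $\cT[\cC]$ suffice. The paper supplies this with a dedicated contradiction argument: if $H_{S,S}$ were rank-deficient, then either no monomial outside $S$ increases the rank of $H_{S,\cdot}$ --- in which case a row dependency of $H_{S,\cT[\cC]}$ produces a genuine relation with leading monomial in $S$, impossible --- or one takes the $\prec$-smallest $m\notin S$ whose column increases the rank and shows by minimality that no relation can have leading monomial $m$, forcing $m\in S$, again a contradiction. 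Some argument of this kind, carried out with cone shifts only, must be added; without it the conclusion ``the output is the reduced \sgb'' is not established.
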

\begin{proof}
  As the \sFGLM algorithm computes kernel vectors of $H_{T,T}$, the
  corresponding polynomials can only have support in $\cT[\cC]$.
  
  Let $S$ be the associated staircase of a \sgb of the ideal of \Crels
  of $\bseq$.

  Let us show first that no monomial $m\not\in S$ is found in the
  staircase by the algorithm. As $m\in\LM(I)$, there exist $\alpha_s\in\K$,
  for all $s\in S$ such that
  $m+\sum_{s\in S}\alpha_s s\in I$, thus
  $\brac{t\pare{m+\sum_{s\in S}\alpha_s s}}_{\bseq}=0$ for all
  $t\in\cT$. Since $T\subset\cT$, this means that the column labeled
  with $m$ is linearly dependent from the previous ones and neither
  $m$ nor any multiple thereof is in the staircase associated to the
  output.
  Hence, the
  computed staircase is included in the correct staircase.

  Let us now assume that the ideal of \Crels of $\bseq$ is
  $0$-dimensional, that is $S$ is finite. We shall show by
  contradiction that the matrix $H_{S,S}$ has full rank, so that the output
  of the \sFGLM algorithm called on $T\supset S$ is a reduced \gb
  whose associated staircase contains $S$. Let us assume that
  $H_{S,S}$ has not full rank and let $m\not\in S$ be the smallest
  monomial for $\prec$ such that
  $\rank H_{S,S\cup\curl{m}}>\rank H_{S,S}$, such a monomial exists
  for otherwise a monomial in $S$ would be the leading monomial of a
  relation.
  Let $R$ be any finite
  subset of $S\cup\curl{\mu\middle|\mu\preceq m}$, which is also a
  staircase
  containing $S\cup\{m\}$. By minimality of $m$, for $\prec$,
  $\rank H_{S,R}=\rank H_{S,S\cup\curl{m}}>\rank H_{S,S}$ and in
  particular the column labeled with $m$ must be independent from the
  previous ones. Thus, no polynomial with leading monomial $m$ can be
  in the ideal of relations and $m$ is in the staircase of this
  ideal. This is a contradiction with the assumption that $m$ is not
  in $S$. Since $S\subseteq T$, the algorithm correctly computes
  a superset of the staircase $S$ and thus the algorithm discovers the
  correct staircase.

  Finally, the polynomials of the \sgb are found by
  linear algebra.
\end{proof}

Concerning the second strategy,
since $I(\cC)$ is spanned by
binomials, the reduced \gb $\cG$ of $I(\cC)$ for $\prec$ is made of binomials, see for
instance~\cite[Chapter~5, Section~3, Exercise~13]{CoxLOS2015}. Note
that while the result is
only asked to be proved for the lexicographic ordering, the given hint
can be used to show that the statement holds for any monomial
ordering. Thus, any monomial in $\cT[\N^{\nu}]$ reduces to a single
monomial modulo $\cG$ and we denote by $\cT[\N^{\nu}]/I(\cC)$ the set
of monomials that cannot be reduced by $\cG$.
Furthermore, if $\by^{\bj}\in\cT[\N^{\nu}]$,
then any monomial $\by^{\bk}\in\cT[\N^{\nu}]$ that divides $\by^{\bj}$
is in $\cT[\N^{\nu}]/I(\cC)$. Indeed, if $\by^{\bk}$ were not, then it
would be a leading monomial in $I(\cC)$ and so would $\by^{\bj}$.
Hence, one can always pick a finite staircase
$T\subset\cT[\N^{\nu}]/I(\cC)$
and call the \sFGLM
algorithm with $T$ and $\prec$.
Then, by
construction, it remains to replace the output polynomials
in $\K[\by]$ by the corresponding ones in $\K[\bx]$. They
will naturally have support in $\cT[\cC]$.

\begin{example}[Continuation of Example~\ref{ex:cone}]
  It is clear that $3\ba_1=\ba_2+\ba_3$ generates all the other
  different ways to decompose an element of $\cC$, hence
  $I(\cC)=\ideal{y_1^3-y_2y_3}$. Thus, when listing the monomials
  for $\DRL(y_1\prec y_2\prec y_3)$ in $\cT[\N^{\nu}]/I(\cC)$, we will
  skip any multiple of $y_1^3$.
\end{example}

\subsection{Terms in a lattice}\label{s:guessing_lattice}
Let $\Lambda_{\pos}$ be the set of nonnegative terms of a sublattice
of $\Z^n$, we aim at guessing the recurrence relations of a table
$\bseq$ by following $\Lambda_{\pos}$. Since a lattice is a special
case of a cone, by Theorem~\ref{th:cone_sFGLM},
restricting ourselves to only considering the
subtable $\pare{\seq_{\bi}}_{\bi\in\Lambda_{\pos}}$ shall make us
guess the reduced \sgb of the ideal of relations of $\bseq$
in $\K[\Lambda_{\pos}]$.

Yet, doing so would in some way make us forget the extra structure
coming with a sublattice: namely its fundamental domain, \ie the
quotient group $\Z^n/\Lambda$. Indeed, if a set of polynomials
$\curl{f_1,\ldots,f_r}$ is such that for all $k$, there exists
$\ba_k\in\Z^n/\Lambda$ such that
$\supp f_k\subset\pare{\ba_k+\Lambda}_{\pos}$, then a classical reduced \gb
$\cG=\curl{g_1,\ldots,g_s}$ of the ideal it spans in $\K[\bx]$ satisfies the same
property. Therefore, if we expect, or even can ensure beforehand, that
the reduced \gb of the ideal of relations of $\bseq$ also satisfies
this property, we aim at guessing this \gb by working \emph{in
  parallel} on several smaller multi-Hankel matrices whose sizes have
been divided by $\Card\pare{\Z^n/\Lambda}$.

To do so, considering an input set of monomials
$T\subset\cT$, we shall split it up into
$T=\bigsqcup\limits_{\ba\in\Z^n/\Lambda}T_{\ba}$, with
$T_{\ba}=T\cap\cT[\pare{\ba+\Lambda}_{\pos}]$, and then call the \sFGLM
algorithm on $\bseq$ and $T_{\ba}$ for each $\ba$. However, the
table terms that appear in $H_{T_{\ba},T_{\ba}}$ are $\seq_{\bi}$
with $\bi\in\pare{2\ba+\Lambda}_{\pos}$. Thus, we might never
consider certain table terms. To circumvent this, we always add the row
and the column labeled with $1$ in these matrices. This yields the
\lsFGLM algorithm or Algorithm~\ref{algo:lattice_sFGLM} and
Theorem~\ref{th:lattice_sFGLM}.

\begin{algorithm2e}[htbp!]
  \small
  \DontPrintSemicolon
  \TitleOfAlgo{\lsFGLM\label{algo:lattice_sFGLM}}
  \KwIn{A table $\bseq=(\seq_{\bi})_{\bi\in\N^n}$ with coefficients in
    $\K$, a monomial ordering $\prec$, a staircase $T$ ordered
    for $\prec$, a nonnegative lattice
    $\Lambda\subseteq\Z^n$,
    a set $\cA\subseteq\N^n$ containing $0$ such that $\Lambda+\cA=\Z^n$.}
  \KwOut{A truncated reduced \gb.}
  Partition $T$ into $T=\bigsqcup\limits_{\ba\in\cA}T_{\ba}$ with
  $T_{\ba}=\pare{T\cap\cT[\pare{\ba+\Lambda}_{\pos}]}$.\;
  \Forall{$\ba\in\cA$}{
    Build the matrix $H_{\curl{1}\cup T_{\ba},\curl{1}\cup T_{\ba}}$.\;
    Compute its column profile rank $S_{\ba}$.\;
  }
  $S\coloneqq\bigcup\limits_{\ba\in\cA} S_{\ba}$.\;
  \Forall(\tcp*[f]{make $S$ a staircase}){$m\in T\setminus S$}{
    \lIf{$\exists s\in S$ such that $m\mid s\in S$}{
      $S\coloneqq S\cup\{m\}$.
    }
  }
  $L\coloneqq T\setminus S$ sorted for $\prec$.\;
  $G\coloneqq\varnothing$.\;
  \While{$L\neq\varnothing$}{
    $g\coloneqq\min_{\prec} L$\;
    Find $\ba\in\cA$ such that $g\in\cT[\pare{\ba+\Lambda}_{\pos}]$.\;
    Solve the linear system
    $H_{S_{\ba},S_{\ba}}\bgamma+H_{S_{\ba},\curl{g}}=0$.\;
    $G\coloneqq G\cup\curl{g+\sum_{s\in S_{\ba}}\gamma_s s}$.\;
    Remove $g$ and any of its multiples from $L$.
  }
  \KwRet $G$.
\end{algorithm2e}
\begin{theorem}\label{th:lattice_sFGLM}
  Let $\Lambda$ be a sublattice of $\Z^n$ with fundamental domain
  $\cA$. Let $\prec$ be a 
  monomial ordering on $\cT$ and let $T\subset\cT$ be a finite staircase
  ordered for $\prec$.

  Then, the \lsFGLM algorithm called on table $\bseq$,
  $T$ and $\prec$ returns a truncated \gb of an ideal whose polynomials are
  each with support in
  $\{1\}\cup\cT[\pare{\ba+\Lambda}_{\pos}]$, with $\ba\in\cA$.

  Furthermore, let $\cG$ be a reduced \gb for $\prec$ satisfying this support
  property. Let $\cS$ be the associated
  staircase and $\bseq$ be a generic
  \Cfin table whose ideal of 
  relations is spanned by $\cG$. Let $T$ be a staircase containing
  $\cS$ and the leading monomials of all the polynomials in $\cG$.
  Then, there exists a non empty
  Zariski open set of values for the table terms $\brac{s}_{\bseq}$ of $\bseq$, with $s\in
  \cS$, such that the \lsFGLM
  algorithm called on $\bseq$, $\prec$, $T$ and $\cA$ correctly
  guesses $\cG$.
\end{theorem}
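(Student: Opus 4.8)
I would treat the two assertions separately.

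\textbf{First assertion.} For $\ba\in\cA$ put $M^{(\ba)}\coloneqq H_{\curl{1}\cup T_{\ba},\,\curl{1}\cup T_{\ba}}$. In the \Crels setting all monomials indexing $M^{(\ba)}$ lie in $\K[\bx]$ and its $\pare{\bx^{\bi},\bx^{\bj}}$-entry is $\seq_{\bi+\bj}$, so $M^{(\ba)}$ is symmetric; hence its column rank profile $S_{\ba}$ coincides with its row rank profile and the principal submatrix $H_{S_{\ba},S_{\ba}}$ it cuts out is invertible, so the linear systems solved in the main loop have a unique solution. Each output polynomial is then $g+\sum_{s\in S_{\ba}}\gamma_s s$ with $g\in T_{\ba}\subseteq\cT[\pare{\ba+\Lambda}_{\pos}]$ and $S_{\ba}\subseteq\curl{1}\cup T_{\ba}$, which yields the announced support, the constant term being the only monomial that can fall outside $\cT[\pare{\ba+\Lambda}_{\pos}]$. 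That the resulting set is a truncated \gb (its leading monomials are the $\prec$-minimal elements of $T\setminus S$, they avoid the computed staircase $S$, and each output polynomial annihilates $\bseq$ against every monomial of its block $S_{\ba}$) follows exactly as in the proof of Theorem~\ref{th:cone_sFGLM}.

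\textbf{Second assertion.} Parametrise a table by the vector $\mathbf{c}=\pare{\brac{s}_{\bseq}}_{s\in\cS}\in\K^{\#\cS}$: since $\cS$ is the canonical $\K$-basis of $\K[\bx]/\ideal{\cG}$, every $\seq_{\bi}$ is the fixed $\K$-linear combination of the $c_s$'s read off the normal form of $\bx^{\bi}$ modulo $\cG$, so every matrix entry occurring below is a fixed linear form in $\mathbf{c}$ and every determinant below is a polynomial in $\mathbf{c}$. Put $U_0=\curl{\mathbf{c}\mid\det H_{\cS,\cS}(\mathbf{c})\neq 0}$; since $\ideal{\cG}$ is always contained in the ideal of \Crels of $\bseq$, one has $\mathbf{c}\in U_0$ if and only if this containment is an equality, and $U_0$ is Zariski open and nonempty, the latter by the hypothesis that a generic \Cfin table with ideal of \Crels $\ideal{\cG}$ exists. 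For $\mathbf{c}\in U_0$ I would then prove two facts. (i) For each $g_k\in\cG$ with $\LM(g_k)=\bx^{\alpha_k}$ of class $\ba_k\in\cA$, the \Crel $g_k$ is valid, $\supp g_k\subseteq\curl{1}\cup T_{\ba_k}$ (because $\supp g_k\setminus\curl{\bx^{\alpha_k}}\subseteq\cS\subseteq T$ and $\cG$ has the support property) and all of its monomials but $\bx^{\alpha_k}$ are $\prec\bx^{\alpha_k}$; therefore column $\bx^{\alpha_k}$ of $M^{(\ba_k)}$ depends on strictly earlier columns, so $\bx^{\alpha_k}\notin S_{\ba_k}$ and hence $\bx^{\alpha_k}\notin S$. (ii) On a nonempty Zariski-open subset $U_1\subseteq U_0$, every $s\in\cS$ of class $\ba$ belongs to the column rank profile $S_{\ba}$ of $M^{(\ba)}$, that is $\cS\subseteq S$.

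Granting (i)--(ii) and shrinking $U_1$ by the further condition $\seq_{\bzero}\neq 0$ (so that $1\in S_{\ba}$ for every $\ba$), the $\prec$-minimal elements of $L=T\setminus S$ are precisely the $\bx^{\alpha_k}$: a monomial of $T\setminus S$ divisible by no $\bx^{\alpha_k}$ would lie in $\cS\subseteq S$, and every proper divisor of a $\bx^{\alpha_k}$ lies in $\cS\subseteq S$ because $\cG$ is reduced; hence the main loop processes exactly the $\bx^{\alpha_k}$, the class it selects for $\bx^{\alpha_k}$ is $\ba_k$, one has $\supp g_k\subseteq\curl{\bx^{\alpha_k}}\cup S_{\ba_k}$ by (ii) together with $1\in S_{\ba_k}$, and validity of $g_k$ makes its coefficient vector the (unique, by the first assertion) solution of $H_{S_{\ba_k},S_{\ba_k}}\bgamma+H_{S_{\ba_k},\curl{\bx^{\alpha_k}}}=0$, so the output is exactly $\cG$. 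The step I expect to be the main obstacle is (ii): the index set $\curl{1}\cup T_{\ba}$ of $M^{(\ba)}$ is in general not a staircase, so the usual \sFGLM column-rank-profile argument does not transfer, and one must control which columns of a multi-Hankel matrix become linearly dependent once the matrix is restricted to a single residue class of $\Z^n/\Lambda$. Since ``$s\in S_{\ba}$'' is a Zariski-open condition on $\mathbf{c}$ and there are finitely many pairs $(s,\ba)$, it suffices to exhibit one admissible witness $\mathbf{c}\in U_0$; I expect this to follow from a semicontinuity/degeneration argument to a situation in which $\ideal{\cG}$ is radical and in general position, where $M^{(\ba)}$ factors as $V^{\T}WV$ with $V$ built from the points of the variety of $\ideal{\cG}$ and its rank profile can be computed directly.
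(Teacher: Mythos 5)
Your architecture coincides with the paper's on both assertions: the support claim follows from the fact that the kernel vectors computed live in matrices indexed by $\curl{1}\cup T_{\ba}$, and the correctness claim reduces to (i) the columns labelled by $\LM(\cG)$ being dependent in their block, plus (ii) the generic full rank of $H_{\cS_{\ba},\cS_{\ba}}$ so that $\cS_{\ba}\subseteq S_{\ba}$. Your symmetry/rank-profile observation in the first part is correct and consistent with what the \sFGLM analysis already supplies, and your bookkeeping in the second part (the minimal elements of $T\setminus S$ being exactly $\LM(\cG)$, the class selected for each leading monomial being the right one) is fine.

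The genuine gap is precisely the step you flag as the obstacle, namely (ii), and the completion you sketch would not work as stated. You cannot degenerate to ``a situation in which $\ideal{\cG}$ is radical and in general position'': $\cG$ is fixed by hypothesis, the entries of $H_{\cS_{\ba},\cS_{\ba}}$ are linear forms in $\mathbf{c}$ whose coefficients come from normal forms modulo this particular $\cG$, and exhibiting a witness for a different ideal says nothing about the non-vanishing of this particular determinant. Moreover the witness must be an admissible table (one whose ideal of relations contains $\ideal{\cG}$), and some admissible tables genuinely fail --- e.g.\ a table supported on $\Lambda$ makes $H_{\cS_{\ba},\cS_{\ba}}$ collapse for classes $\ba$ with $2\ba\neq 0$ --- so nondegeneracy really has to be argued, not assumed. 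For comparison, the paper closes (ii) as follows: on the open set where $H_{\cS,\cS}$ is invertible, each rectangular block $H_{\cS_{\ba},\cS}$ has full row rank, and it then asserts that generically the columns indexed by $\cS_{\ba}$ already realize that rank, i.e.\ the shifts of the sought relation by the monomials of $\cS_{\ba}$ suffice to determine it. This is itself only a two-sentence genericity assertion, so your diagnosis that (ii) is the crux is accurate; but a complete argument must be carried out for the given $\cG$ --- for instance by producing one linear functional $\ell$ on $\K[\bx]/\ideal{\cG}$ for which the bilinear form $(f,g)\mapsto\ell(fg)$ is nondegenerate on the span of $\cS_{\ba}$ --- rather than by replacing the ideal.
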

\begin{proof}
  This proof follows mostly the same steps as that of Theorem~\ref{th:cone_sFGLM}.

  As the algorithm computes kernel vectors of matrices
  $H_{\curl{1}\cup T_{\ba},\curl{1}\cup T_{\ba}}$, the corresponding
  polynomials can only have support in
  $\curl{1}\cup\cT[\pare{\ba+\Lambda}_{\pos}]$.

  Let $S$ be the associated staircase of a reduced \gb of the ideal of
  \Crels of $\bseq$. For each $\ba\in\cA$, we let
  $S_{\ba}=S\cap\cT[\pare{\ba+\Lambda}_{\pos}]$.

  Let us show first that no monomial $m\not\in S$ is found in the
  staircase by the algorithm. As $m\in\LM(I)$, there exist
  $g=\LM(g)+\sum_{\alpha_s\in S_{\ba}}\alpha_s\,s\in I$ such that
  $\LM(g)\in\cT[\pare{\ba+\Lambda}_{\pos}]$ and $\LM(g)|m$. Thus,
  $\frac{m}{\LM(g)}g\in I$ and for all $t\in\cT$,
  $\brac{t\frac{m}{\LM(g)}g}_{\bseq}=0$. In particular, this is true for all
  $t\in T_{\bb}$, with $m\in T_{\bb}$, so that the column labeled with $m$
  is linearly dependent from the previous ones in
  $H_{\curl{1}\cup  T_{\bb},\curl{1}\cup T_{\bb}}$. Hence, neither $m$
  nor any of its multiples is in the staircase associated to the
  output. That is, the computed staircase is included in the correct
  staircase.

  It remains to prove the last statement. Let $\bseq$ be a sequence
  whose ideal of relations $I$ is spanned by $\cG$. First, from the proof of the
  \sFGLM algorithm, we know that the matrix $H_{\cS,\cS}$ has full
  rank. For any $\ba\in\cA$, we let
  $\cS_a=\{1\}\cup(\cS\cap\cT[(\ba+\Lambda)_{\pos}])$. If
  $H_{\cS_{\ba},\cS_{\ba}}$ has full rank, then for $m\in
  \LM(\cG)\cap\cT[(\ba+\Lambda)_{\pos}]$, this matrix allows us to
  determine the polynomial in $\cG$ with leading monomial $m$. Thus,
  the algorithm correctly returns $\cG$.

  For each $\ba\in\cA$, we know that the matrix $H_{\cS_{\ba},\cS}$
  has full rank. Now, to recover a relation with support in
  $\cT[\ba+\Lambda]_{\pos}$, generically, it suffices to consider
  sufficiently many shifts of this relations. And in particular, we
  can take the shifts induced by monomials in $\cS_{\ba}$, meaning
  that the matrix $H_{\cS_{\ba},\cS_{\ba}}$ generically has full rank. 
\end{proof}

\begin{remark}
  Adding a row labeled with $1$ in the matrices is necessary to prevent
  computations of incorrect relations when one of them is divisible by a non
  trivial monomial. Let us consider a unidimensional table $\bseq$
  satisfying the relation $x^4-a x^2$ with $a\in\K$ and let $\Lambda=2\Z$
  and $T=\curl{1,x,x^2,x^3,x^4}$, so that $T_0=\curl{1,x^2,x^4}$ and
  $T_1=\curl{x,x^3}$. We thus build the matrices
  \begin{align*}
  H_{T_0,T_0}&=
  \kbordermatrix{
    \ 	&1	&x^2	&x^4\\
    1	&\brac{1}_{\bseq}	&\brac{x^2}_{\bseq}	&\brac{x^4}_{\bseq}\\
    x^2	&\brac{x^2}_{\bseq}	&\brac{x^4}_{\bseq}	&\brac{x^6}_{\bseq}\\
    x^4	&\brac{x^4}_{\bseq}	&\brac{x^6}_{\bseq}	&\brac{x^8}_{\bseq}\\
  }=
  \kbordermatrix{
    \ 	&1	&x^2	&x^4\\
    1	&\seq_0	&\seq_2	&a \seq_2\\
    x^2	&\seq_2	&a \seq_2	&a^2 \seq_2\\
    x^4	&a \seq_2	&a^2 \seq_2	&a^3 \seq_2\\
    },\\
    H_{T_1,T_1}&=  \kbordermatrix{
    \ 	&x	&x^3\\
    x	&\brac{x^2}_{\bseq}	&\brac{x^4}_{\bseq}\\
    x^3	&\brac{x^4}_{\bseq}	&\brac{x^6}_{\bseq}\\
  }=
  \kbordermatrix{
    \ 	&x	&x^3\\
    x	&\seq_2	&a \seq_2\\
    x^3	&a \seq_2	&a^2 \seq_2\\
  }.
  \end{align*}

By hypothesis, clearly the column labeled with $x^4$ is linearly dependent
from the ones with label $1$ and $x^2$.
However, since $\brac{x^4-a x^2}_{\bseq}=\brac{x^6-a x^4}_{\bseq}=0$,
the column labeled with $x^3$ is linearly dependent from the
column labeled with $x$
in the second matrix. Therefore, these matrices do not allow us to
recover that $x^3$ is in the staircase of the ideal of relations of
the input table.

Yet, the matrix
\[
  H_{\{1\}\cup T_1,\{1\}\cup T_1}=
  \kbordermatrix{
    \ 	&1	&x	&x^3\\
    1	&\brac{1}_{\bseq}	&\brac{x}_{\bseq}	&\brac{x^3}_{\bseq}\\
    x	&\brac{x}_{\bseq}	&\brac{x^2}_{\bseq}	&\brac{x^4}_{\bseq}\\
    x^3	&\brac{x^3}_{\bseq}	&\brac{x^4}_{\bseq}	&\brac{x^6}_{\bseq}\\
  }=
  \kbordermatrix{
    \ 	&1	&x	&x^3\\
    1	&\seq_0	&\seq_1	&\seq_3\\
    x	&\seq_1	&\seq_2	&0\\
    x^3	&\seq_3	&0	&0\\
  }
\]
has its column labeled with $x^3$ independent from the previous two
if, and only if, $\seq_3\neq 0$, allowing us to detect that $x^3$ is
in the staircase.
\end{remark}

\begin{example}\label{ex:lattice}
  Consider the table
  $\bseq=\pare{2^i\pare{j+1\bmod 3}}_{(i,j)\in\N^2}$ defined over 
  $\Q$. Using, for instance, the \bms or the \sFGLM algorithms, we can
  easily show that its ideal of
  relations is $\ideal{y^3-1,x-2}$. Let us consider the lattice
  $\Lambda=(0,3)\Z+(1,0)\Z$, so that
  $\cA=\curl{(0,0),(0,1),(0,2)}$ and 
  $T=\curl{1,y,y^2,y^3,y^4,y^5,x}$.

  Then, Algorithm~\ref{algo:lattice_sFGLM} builds the matrices
  \[H_{T_0,T_0}={}
  \kbordermatrix{
    \ 	&1	&y^3	&x\\
    1	&1	&1	&2\\
    y^3	&1	&1	&2\\
    x	&2	&2	&4\\
  },\ H_{T_1,T_1}={}
  \kbordermatrix{
    \ 	&1	&y	&y^4\\
    1	&1	&2	&2\\
    y	&2	&0	&0\\
    y^4	&2	&0	&0\\
  },\ H_{T_2,T_2}={}
  \kbordermatrix{
    \ 	&1	&y^2	&y^5\\
    1	&1	&0	&0\\
    y^2	&0	&2	&2\\
    y^5	&0	&2	&2\\
  }.
  \]
  So that $S_0=\curl{1}$, $S_1=\curl{1,y}$ and $S_2=\curl{1,y^2}$.
  Hence $S=\curl{1,y,y^2}$,
  $L=\curl{y^3,y^4,y^5,x}$. This yields the linear systems
  $H_{S_0,S_0}\bgamma+H_{S_0,\curl{y^3}}=0$  and
  $H_{S_0,S_0}\bgamma+H_{S_0,\curl{x}}=0$ allowing us to
  recover $y^3-1$ and $x-2$.

  Notice that $\bseqq=\pare{2^i\pare{j\bmod 3}}_{(i,j)\in\N^2}$ has the same ideal of
  relations. Yet, the algorithm will build the matrices
  \[H_{T_0,T_0}={}
  \kbordermatrix{
    \ 	&1	&y^3	&x\\
    1	&0	&0	&0\\
    y^3	&0	&0	&0\\
    x	&0	&0	&0
  },\ H_{T_1,T_1}={}
  \kbordermatrix{
    \ 	&1	&y	&y^4\\
    1	&0	&1	&1\\
    y	&1	&2	&2\\
    y^4	&1	&2	&2\\
  },\ H_{T_2,T_2}={}
  \kbordermatrix{
    \ 	&1	&y^2	&y^5\\
    1	&0	&2	&2\\
    y^2	&2	&1	&1\\
    y^5	&2	&1	&1\\
  },
  \]
  so that $S_0=\varnothing$, $S_1=\curl{1,y}$, $S_2=\curl{1,y^2}$ and $S=\curl{1,y,y^2}$.
  Since the linear systems
  $H_{S_0,S_0}\bgamma+H_{S_0,\curl{x^3}}=0$ and
  $H_{S_0,S_0}\bgamma+H_{S_0,\curl{y}}=0$ are
  empty, they do not allow us to
  recover $x^3-1$ and $y-2$. Indeed,
  $\varnothing=S_0\neq S\cap\cT[\Lambda_{\pos}]=\curl{0}$.

  Yet, for the table
  $\bseqq'=\pare{\brac{(1+\lambda y)x^i y^j}_{\bseq}}$, the
  algorithm builds the matrices
  \begin{align*}
    H_{T_0,T_0}&=
                 \kbordermatrix{
                 \ 	&1		&y^3		&x\\
    1	&\lambda	&\lambda	&2\lambda\\
    y^3	&\lambda	&\lambda	&2\lambda\\
    x	&2\lambda	&2\lambda	&4\lambda
                                    },&H_{T_1,T_1}&=
                                                    \kbordermatrix{
                                                    \ 	&1		&y		&y^4\\
    1	&\lambda	&1+2\lambda	&1+2\lambda\\
    y	&1+2\lambda	&2		&2\\
    y^4	&1+2\lambda	&2		&2\\
    },\\
    H_{T_2,T_2}&=
                 \kbordermatrix{
                 \ 	&1		&y^2		&y^5\\
    1	&\lambda	&2		&2\\
    y^2	&2		&1+2\lambda	&1+2\lambda\\
    y^5	&2		&1+2\lambda	&1+2\lambda\\
    }.
  \end{align*}
  It is clear that $S_0=\curl{1}$, provided $\lambda\neq 0$,
  $S_1=\curl{1,y}$, provided $4\lambda^2+2\lambda+1\neq 0$, and
  $S_2=\curl{1,y^2}$, provided $2\lambda^2+\lambda-4\neq 0$. All in all,
  the algorithm succeeds for $\bseqq'$ as long as $\lambda$ does not satisfy
  $\lambda(4\lambda^2+2\lambda+1)(2\lambda^2+\lambda-4)=0$.
\end{example}

\begin{remark}
  While we assume that $\Lambda$ is a sublattice of $\Z^n$, hence of
  rank $n$, it can actually be any $\Z$-submodule of smaller
  rank $\nu$. However,
  this means we can only guess an ideal of relations in $\nu$
  variables so that it may not be the whole ideal of relations.
  Nevertheless, this kind of restriction can be of
  interest in the \Pfin application where the kernel equation makes us
  study the \Pfin nature of a subsequence where some indices are set.
\end{remark}

\subsection{Application to \gb change of orderings
  with the action of a matrix group}\label{ss:app_fglm}
In~\cite{FaugereM2017}, the authors propose a variant of the \FGLM
algorithm~\cite{FaugereGLM1993}, the so-called \spFGLM algorithm,
relying on guessing \Crels. More
precisely, from the input \gb $\cG$, they build a random table $\bseq$
whose
ideal of relations is $\ideal{\cG}$. To do so, first, for each monomial $s$ in the staircase associated to
$\cG$, they pick at random the table term
$\brac{s}_{\bseq}$, then they compute the other table terms using the
\Crels induced by $\cG$. Finally, applying an algorithm for guessing \Crels on this
table and the second input ordering, they obtain the \gb of the ideal
of relations of this table for this second ordering. If the first \gb
spans a Gorenstein ideal~\cite{Brachat2010,ElkadiM2007}, then with
high probability, the output \gb is a \gb of the same ideal and thus
the target one.

In particular, assuming generic properties, detailed below, on the polynomials that
span the ideal we want to compute a \gb of, this algorithm comes down
to computing products of a sparse matrix and some vectors and solving
Hankel systems.

The goal of this section is to extend this approach to abelian group actions
on the ideal. In particular, we will restrict ourselves to finite
abelian matrix
group actions, that is finite abelian subgroups of $\GL(n)$ where $A\in\GL(n)$ acts
on $f(\bx)\in\K[\bx]$ by sending it to
$f(A\bx)$.

\subsubsection{Finite matrix group actions}\label{sss:gln}
We start by recalling some results on finite matrix group actions on
ideals of $\K[\bx]$.

Let $G$ be a finite abelian matrix group. By the invariant factors theorem, there
exist $q_1\mid\cdots\mid q_{\ell}$ such that $G\simeq
\Z/q_1\Z\times\cdots\times\Z/q_{\ell}\Z$ and in particular, for any $g\in
G$, $g^{q_{\ell}}=1$ and $q_{\ell}$ is minimal for this property.

Furthermore if $|G|=q_1\cdots q_{\ell}$ is
not divisible by the characteristic of the coefficient field $\K$,
then there exists a primitive $q_{\ell}$th root of unity $\zeta$ such
that the matrices in $G$ are simultaneously diagonalizable with powers of 
$\zeta$ on the diagonals,
see~\cite[Theorem~2]{FaugereS2013}. After this diagonalization
process, which comes down to a linear change of variables, for each
matrix in $G$, there exist
natural numbers $0\leq\veps_1,\ldots,\veps_n\leq q_{\ell}-1$
such that $x_i$ is sent onto $\zeta^{\veps_i}x_i$ by this matrix.

\begin{definition}[{\cite[Definition~3]{FaugereS2013}}]\label{def:G-deg}
  Let $G\simeq
  \Z/q_1\Z\times\cdots\times\Z/q_{\ell}\Z$, with $q_1\mid\cdots\mid
  q_{\ell}$, be a diagonal subgroup of $\GL(n)$ and $\zeta$ be a
  $q_{\ell}$th root of unity, then there exist matrices
  $D_1,\ldots,D_n$ spanning $G$ such that each $D_i$ has order $q_i$.

  For each monomial $m\in\cT$, there exist
  $(\mu_1,\ldots,\mu_{\ell})\in\Z/q_1\Z\times\cdots\times\Z/q_{\ell}\Z$
  such that for all $i$, $m$ is sent onto 
  $\zeta^{\mu_i q_{\ell}/q_i}m$ by $D_i$. Then, $m$ is said to have
  \emph{$G$-degree} $\pare{\mu_1,\ldots,\mu_{\ell}}$.

  Furthermore, a polynomial is \emph{$G$-homogeneous} if all its monomials
  have same $G$-degree.
\end{definition}

From this, one can prove that the $G$-degree of the product of two
monomials is the sum of their $G$-degrees. Since the $G$-degree of the
monomial $1$ is $\pare{0,\ldots,0}$, the subset of monomials of
$G$-degree $\pare{0,\ldots,0}$ is a sublattice $\cT[\Lambda_{\pos}]$ of $\cT$. A
consequence of this is that if $f_1,\ldots,f_s$ are $G$-homogeneous
polynomials, then a reduced \gb of $\ideal{f_1,\ldots,f_s}$ is made of
$G$-homogeneous polynomials as well and $\ideal{f_1,\ldots,f_s}$ is
stable by the action of $G$, see~\cite[Theorem~4]{FaugereS2013}.

\subsubsection{\gbs change of orderings}
From the reduced $\DRL$ \gb $\cG$ of such an ideal, it then makes sense to apply the
\spFGLM algorithm in order to obtain the reduced $\LEX$ \gb.
Since we already know that the
support of each polynomial in the target \gb lies on a lattice, or an
affine translate thereof, we can use
Algorithm~\ref{algo:lattice_sFGLM} to guess the relations on the table
that is built by the algorithm. Furthermore,
since the table is built with its first table terms picked at random,
no fake relations, like in Example~\ref{ex:lattice}, should be
guessed.

\begin{proposition}
  Let $G$ be an abelian group as in
  Definition~\ref{def:G-deg}.
  
  Let $f_1,\ldots,f_s$ be generic polynomials of degree
  $d_1,\ldots,d_s$ such that $I=\langle f_1,\ldots,f_s\rangle$ is
  zero-dimensional stable by the action of $G$.
  
  Let $\prec$ and $<$ be a monomial orders and let $\cG$ and $\cH$ be
  the reduced \gb of $I$ for $\prec$ and $<$respectively.

  Then, the guessing step of $\cH$ when calling the \spFGLM algorithm
  on $\cG$, $\prec$ and $<$ can be sped up using the \lsFGLM
  algorithm by a factor $O(|G|^{\omega-1})$ instead of the \sFGLM
  algorithm.
\end{proposition}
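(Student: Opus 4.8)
The plan is to reduce the statement to Theorem~\ref{th:lattice_sFGLM} together with a dimension count on the staircase. Two facts have to be established: (i) every polynomial of the target reduced \gb $\cH$ has its support confined to a single affine translate of the sublattice $\Lambda$ of monomials of $G$-degree $\pare{0,\dots,0}$, so that the \lsFGLM algorithm applies; and (ii) under the genericity hypothesis the staircase of $\cH$ is equidistributed among the $|G|=\Card\pare{\Z^n/\Lambda}$ cosets of $\Lambda$, so that each of the $|G|$ multi-Hankel matrices built by the \lsFGLM algorithm has size $\Theta(D/|G|)$.

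For (i) I would argue as in Section~\ref{sss:gln}: after the diagonalizing change of variables, each $g\in G$ rescales every variable by a power of $\zeta$, hence maps every monomial to a scalar multiple of itself and in particular fixes the set of monomials and preserves leading monomials for any ordering. Thus for $g\in G$, $g\cdot\cH$ is again the reduced \gb of $g\cdot I=I$ for $<$, so $g\cdot\cH=\cH$ by uniqueness; since the elements of a reduced \gb have pairwise incomparable leading monomials, each $h\in\cH$ is sent by $g$ to a scalar multiple of itself, \ie $h$ is $G$-homogeneous, whence $\supp h\subseteq\cT[\pare{\ba_h+\Lambda}_{\pos}]$ for the $\ba_h\in\cA$ representing its $G$-degree (the same holding for the $\prec$-\gb $\cG$). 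So $\cH$ satisfies the support hypothesis of Theorem~\ref{th:lattice_sFGLM}. The table $\bseq$ built by the \spFGLM algorithm is, by construction, a \Cfin table with ideal of relations $\ideal{\cG}=\ideal{\cH}=I$ whose terms indexed by the staircase are drawn at random; hence, with high probability, its staircase values fall in the non-empty Zariski open set provided by Theorem~\ref{th:lattice_sFGLM}, so that the \lsFGLM algorithm called on $\bseq$, $<$, a large enough staircase $T$ (whose size is controlled by $D=\deg I$, read off from $\cG$) and $\cA$ correctly returns $\cH$; randomness of these values also precludes the fake relations of Example~\ref{ex:lattice}.

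For the cost: run with the \sFGLM algorithm, the guessing step amounts to dense linear algebra on one multi-Hankel matrix of size $\Theta(D)$, of cost $O(D^{\omega})$. The \lsFGLM algorithm replaces this by $|G|$ independent computations on the blocks $T_{\ba}$, the one indexed by $\ba\in\cA$ involving a matrix of size $\Card\pare{\curl{1}\cup\cS_{\ba}}$, where $\cS_{\ba}=\cS\cap\cT[\pare{\ba+\Lambda}_{\pos}]$ and $\cS$ is the staircase of $\cH$. Now the monomials of $\cS$ of $G$-degree $\ba$ form a $\K$-basis of the corresponding isotypic component of $\K[\bx]/I$, and by the Hilbert-function estimates of~\cite{FaugereS2013} for generic $G$-homogeneous systems each such component has dimension asymptotic to $D/|G|$, so $\Card\cS_{\ba}\sim D/|G|$ uniformly in $\ba\in\cA$. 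Hence the total cost of the guessing step through the \lsFGLM algorithm is $\sum_{\ba\in\cA}O\bigl(\pare{D/|G|}^{\omega}\bigr)=O\bigl(D^{\omega}/|G|^{\omega-1}\bigr)$, \ie the announced $O(|G|^{\omega-1})$ speedup.

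The easy part is (i): $G$-homogeneity of $\cH$ is essentially forced by uniqueness of reduced \gbs, and the extra row and column labeled with $1$ appended to each block add at most one unit to each of the $|G|$ matrix sizes, which is asymptotically harmless. The main obstacle is (ii), the equidistribution of the staircase: without it the $\Card\cS_{\ba}$ could be arbitrarily unbalanced and one would only have the trivial bound $\sum_{\ba}\Card\pare{\cS_{\ba}}^{\omega}\leq D^{\omega}$, \ie no guaranteed gain over the \sFGLM algorithm. This is exactly where the genericity hypothesis on $f_1,\dots,f_s$ enters, through~\cite{FaugereS2013}, and one must invoke the right statement there — equality of the Hilbert functions of the isotypic components — which itself presupposes $\mrm{char}\,\K\nmid|G|$ so that the diagonalization of $G$ used throughout is available.
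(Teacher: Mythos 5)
Your proposal is correct and takes essentially the same route as the paper: the paper's proof likewise invokes the genericity result of \cite{FaugereS2013} to assert that the polynomials of $\cH$ and the monomials of $\Staircase(\cH)$ are evenly split among the $G$-degrees, and then compares the $O(\Card{T}^{\omega})$ kernel computation on one multi-Hankel matrix with the $|G|$ blocks of size roughly $\Card{T}/|G|$ handled by the \lsFGLM algorithm, yielding the $O(|G|^{\omega-1})$ speedup. Your step (i) ($G$-homogeneity of $\cH$ via uniqueness of the reduced \gb) and the correctness discussion through Theorem~\ref{th:lattice_sFGLM} merely make explicit what the paper delegates to Section~\ref{sss:gln} and the surrounding text.
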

\begin{proof}
  From~\cite{FaugereS2013} and the genericity assumption on $I$, the
  polynomials in $\cH$ are evenly split between all the
  $G$-degrees. Furthermore, so are the monomials in the
  $\Staircase(\cH)$.

  Now, to recover $\cH$, one needs to call the \sFGLM and \lsFGLM
  algorithms on a staircase $T$ that contains $S$ and
  $\LM(\cH)$. Then, the \sFGLM algorithm computes the right-kernel of
  $H_{T,T}$ in at most $O(\#T^{\omega})$ operations. Now, the \lsFGLM
  will build $|G|$ submatrices of $H_{T,T}$ of size roughly $\#T/|G|$
  and compute the right-kernel of each. Thus it can be done in
  $O(\#T^{\omega}/|G|^{\omega-1})$.
\end{proof}

We shall say that a zero-dimensional ideal $I\subset\K[\bx]$
has
\begin{description}
\item[Property~S,] if its reduced \gb for
  $\LEX(x_n\prec\cdots\prec x_1)$ is in
  \emph{shape position}. That is, there exist
  $g_1,\ldots,g_n\in\K[x_n]$ of degree at most $D-1$ such that this
  reduced \gb is
  $\curl{x_n^D+g_n(x_n),x_{n-1}+g_{n-1}(x_n),\ldots,x_1+g_1(x_n)}$.
\item[Property~M,] if its reduced \gb for
  $\DRL(x_n\prec\cdots\prec x_1)$
  satisfies the following condition. For every monomial $m$ in the
  staircase associated to this \gb, either $m x_n$ is in the staircase
  or it is the leading monomial of some polynomial in this \gb.
\end{description}
Let us recall that if $I$ is spanned by generic polynomials
$f_1,\ldots,f_n\in\K[\bx]$ of degree $d_1,\ldots,d_n$ and $\K$ is
sufficiently large or infinite, then both Properties~S and~M are satisfied. See for
instance~\cite[Proposition~5.3]{FaugereM2017}, where $x_1$ is chosen
as the smallest variable, for the latter. For the former, this is a
direct consequence of $I$ being radical with solutions not sharing the
same last coordinate. Thus, the Shape lemma
applies without requiring any change of variables, see~\cite[Lemma~1.4]{GianniM1989}.

Under these assumptions, several algorithms can be used to compute the
reduced $\LEX$ \gb of an ideal of degree $D$ from the reduced $\DRL$
one. The seminal one, 
\FGLM~\cite{FaugereGLM1993} with a complexity $O(n D^3)$, the \spFGLM
one~\cite{FaugereM2011,FaugereM2017} with a complexity
$\tilde{O}(k D^2+n D)$, where $k$ is the number of polynomials in the
reduced $\DRL$ \gb whose leading monomial is divisible by $x_n$, a
faster variant~\cite{FaugereGHR2014} of the \FGLM algorithm using Keller-Gehrig
algorithm~\cite{Keller-Gehrig1985} 
or
\textsc{SyzygyModuleBasis}~\cite[Algorithm~3]{NeigerS2020} both with a complexity
$\tilde{O}(n D^{\omega})$.

Whenever an ideal is stabilized by the action of such a finite abelian
matrix group, the goal is to take advantage of this to speed the
change of orderings algorithm up. In~\cite[Theorem~10]{FaugereS2013}, the
authors show the complexity of the \FGLM algorithm drops to
$O(D^3/|G|^2)$, mainly because they deal with $|G|$ matrices of sizes roughly
$D/|G|$ instead of one larger matrix of size $D$. These matrices
correspond to those of monomials of each $G$-degree. It would be
interesting to study if, using the same
trick, one could make the complexities of the faster variant of the
\FGLM algorithm or of the \textsc{SyzygyModuleBasis} algorithm
drop to $\tilde{O}(n D^{\omega}/|G|^2)$ or even $\tilde{O}(n
D^{\omega}/|G|^{\omega-1})$.

Let us notice that in this situation, the \spFGLM algorithm only
relies on $1$-dimensional algorithms like the \bmy one as the best
strategy.
We now focus
on the complexity improvements one can reach in this setting when the
ideal spanned by $\cG$ and $\cH$ is stable under the action of a $G$
as in Definition~\ref{def:G-deg}.

We now focus on the \spFGLM algorithm and we assume that a reduced
$G$-homogeneous \gb for
$\DRL(x_n\prec\cdots\prec x_1)$, spanning an ideal satisfying
Property~M, is given and the goal is to recover the reduced \gb for
$\LEX(x_n\prec\cdots\prec x_1)$ satisfying Property~S. By $G$-homogeneity, the
support of each polynomial in the target \gb, namely
$\{x_n^D+g_n(x_n),x_{n-1}+g_{n-1}(x_n),\ldots,x_1+g_1(x_n)\}$, is already
known. It is given by the $G$-degree of its
leading monomial, namely 
$x_n^D,x_{n-1},\ldots,x_1$. Since $G$ is finite, there exists $d>0$ minimal such
that $x_n^d$ has $G$-degree $\pare{0,\ldots,0}$ and there exists
$\delta_n,\ldots,\delta_1\geq 0$, all minimal, such that $x_n^{\delta_n}$ has same
$G$-degree as $x_n^D$ and $x_n^{\delta_i}$ has same $G$-degree as
$x_i$ for $1\leq i\leq n-1$. Therefore,
for $1\leq i\leq n$,
$\supp g_i=\curl{x_n^{\delta_i},x_n^{\delta_i+d},\ldots,
  x_n^{\delta_i+\floor{\frac{D-1-\delta_i}{d}}d}}$.

Thus, the polynomial $g_n$ can be computed by solving the following
Hankel system
\[
  \kbordermatrix{
    \	 		&x_n^{\delta_n}	&x_n^{\delta_n+d}	&\cdots	&x_n^{D-d}\\
    x_n^{\delta_n}	&\brac{x_n^{2\delta_n}}_{\bseq}
    &\brac{x_n^{2\delta_n+d}}_{\bseq}	&\cdots	&\brac{x_n^{D-d+\delta_n}}_{\bseq}\\
    x_n^{\delta_n+d}	&\brac{x_n^{2\delta_n+d}}_{\bseq}	&\brac{x_n^{2\delta_n+2d}}_{\bseq}
    &\cdots	&\brac{x_n^{D-d+\delta_n}}_{\bseq}\\
    \vdots	&\vdots	&\vdots	&&\vdots\\
    x_n^{\delta_n+\floor{\frac{D-1-\delta_n}{d}}d}=x_n^{D-d}
    &\brac{x_n^{D-d}}_{\bseq}
    &\brac{x_n^{D-d+\delta_n}}_{\bseq}
    &\cdots	&\brac{x_n^{2D-2d}}_{\bseq}\\
  }\bgamma+
  \kbordermatrix{
    \ 	&x_n^D\\
    x_n^{\delta_n}	&\brac{x_n^{D+\delta_n}}_{\bseq}\\
    x_n^{\delta_n+d}	&\brac{x_n^{D+\delta_n+d}}_{\bseq}\\
    \vdots	&\vdots\\
    x_n^{D-d}		&\brac{x_n^{2D-d}}_{\bseq}\\
  }=0.
\]
Denoting $M_n$ the matrix of the multiplication by $x_n$ in
$\K[\bx]/I$, $\bone=\pare{
  \begin{smallmatrix}
    1\\0\\\vdots\\0
  \end{smallmatrix}}$ and $\br$ a vector picked at random, the
table terms $\brac{x_n^i}_{\bseq}$ are defined as
$\br^{\T}M_n^i\bone$. This is done by computing
$v_0=\br^{\T}, v_1=v_0M_n,v_2=v_1M_n,\ldots$ and then extracting the
first coordinate of each vector to simulate the multiplication by
$\bone$.

Since we do not need all the terms but only $v_{2\delta_n},
v_{2\delta_n+d}, v_{2\delta_n+2d},\ldots$, we first compute
$v_{2\delta_n}$ and $M_n^d$ in order to perform big steps. Let us
notice that, following~\cite{FaugereM2011,FaugereM2017}, by Property~M,
the columns of matrix $M_n$ are of two types. If a monomial $m$ in the
staircase is
such that $m x_n$ is still in the staircase, then the column
corresponding to $m$ is \emph{trivial}, it is a vector of the
canonical basis. Otherwise, $m$ is the leading monomial of a
polynomial $g$ in the reduced \gb and the column corresponding to $m$
is the coefficient vector of its normal form, namely $m-g$. Usually,
these latter vectors are denser than the former.
Then,
$M_n^d$ has the same shape as $M_n$, it has trivial and non-trivial
columns. Furthermore, if $M_n$ has $k$ non-trivial columns, then
$M_n^d$ has at most $\max(D,k d)$ non-trivial columns.
From~\cite{FaugereS2013} and the genericity
assumption on $I$, we know we can split $M_n$ in
$|G|^2$ matrices of size at most $\ceil{D/|G|}$. Furthermore, its
non-trivial columns are evenly split in
the small matrices, \ie the number of non-trivial
columns of each small matrix is at most $\ceil{k/|G|}$. Then, we can
multiply all these small matrices accordingly to obtain the splitting
of $M_n^d$.

Now, polynomials $g_1,\ldots,g_{n-1}$ can be computed by solving a
similar Hankel system:
\[
  \kbordermatrix{
    \	 		&x_n^{\delta_i}	&x_n^{\delta_i+d}
    &\cdots	&x_n^{\delta_i+\floor{\frac{D-1-\delta_i}{d}}d}\\
    x_n^{\delta_n}	&\brac{x_n^{\delta_i+\delta_n}}_{\bseq}
    &\brac{x_n^{\delta_i+\delta_n+d}}_{\bseq}
    &\cdots	&\brac{x_n^{\delta_i+\delta_n+\floor{\frac{D-1-\delta_i}{d}}d}}_{\bseq}\\
    x_n^{\delta_n+d}	&\brac{x_n^{\delta_i+\delta_n+d}}_{\bseq}
    &\brac{x_n^{\delta_i+\delta_n+2d}}_{\bseq}
    &\cdots	&\brac{x_n^{\delta_i+\delta_n+\floor{\frac{D-1-\delta_i}{d}+1}d}}_{\bseq}\\
    \vdots	&\vdots	&\vdots	&&\vdots\\
    x_n^{D-d}	&\brac{x_n^{\delta_i+D-d}}_{\bseq}	&\brac{x_n^{\delta_i+\delta_n+D-d}}_{\bseq}
    &\cdots	&\brac{x_n^{\delta_i+\delta_n+\floor{\frac{D-1-\delta_i}{d}-1}d+D}}_{\bseq}\\
  }\bgamma+
  \kbordermatrix{
    \ 	&x_i\\
    x_n^{\delta_n}	&\brac{x_ix_n^{\delta_n}}_{\bseq}\\
    x_n^{\delta_n+d}	&\brac{x_ix_n^{\delta_n+d}}_{\bseq}\\
    \vdots	&\vdots\\
    x_n^{D-d}		&\brac{x_ix_n^{D-d}}_{\bseq}\\
  }=0.
\]
However, the matrices might all be different. In order to speed the
computation up, we change the linear systems into ones with the same
matrix as the first one. This is done by multiplying all the column
labels by $x_n^{\delta_n-\delta_i}$. The constant vectors of the
systems thus become
\[
  \kbordermatrix{
    \ 	&x_n^{\delta_n-\delta_i}x_i\\
    x_n^{\delta_n}	&\brac{x_ix_n^{2\delta_n-\delta_i}}_{\bseq}\\
    x_n^{\delta_n+d}	&\brac{x_ix_n^{2\delta_n+d-\delta_i}}_{\bseq}\\
    \vdots	&\vdots\\
    x_n^{D-d}		&\brac{x_ix_n^{D-d+\delta_n-\delta_i}}_{\bseq}\\
  }.
\]

\begin{proposition}\label{prop:FGLM}
  Let $I\subset\K[\bx]$ be a zero-dimensional ideal of degree $D$,
  invariant under the action of a finite diagonal matrix
  group $G$. Let us assume that $I$ satisfies both properties~S and~M
  and that the matrix $M_n$ has $k$ non-trivial columns. Let furthermore
  $S$ be the staircase associated to the $\LEX(x_n\prec\cdots\prec x_1)$
  \gb of $I$,
  $\cT[\Lambda_{\pos}]$ be the set of monomials of $G$-degree $0$ and for
  $A,B\subseteq\cT$, $A+B=\curl{ab\middle|a\in A,b\in B}$
  be the Minkowski sum of $A$ and $B$.

  Then, we can recover the $\LEX(x_n\prec\cdots\prec x_1)$ \gb, $\cG$,
  of $I$ from
  its $\DRL(x_n\prec\cdots\prec x_1)$ \gb using
  $\Card{\pare{(S\cap\cT[\Lambda_{\pos}])+
      \big((S\cap\cT[\Lambda_{\pos}])\cup\LM(\cG)\big)}}$
  table 
  terms and $O\pare{\frac{n k D^2}{|G|}}$ operations.
\end{proposition}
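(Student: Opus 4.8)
The plan is to combine three ingredients already in place above --- the shape of $\cG$ forced by Properties~S and~M, the $G$-homogeneity of reduced \gbs of $G$-stable ideals, and the \spFGLM machinery built on the multiplication matrix $M_n$ --- and then to carry out the bookkeeping of table queries and field operations. First I would pin down $\cG$: since $I$ is $G$-stable its reduced \gb for every monomial ordering is $G$-homogeneous~\cite[Theorem~4]{FaugereS2013}, so with Property~S one gets $\cG=\curl{x_n^D+g_n(x_n)}\cup\curl{x_i+g_i(x_n)\mid 1\leq i\leq n-1}$ with each $g_j\in\K[x_n]$ being $G$-homogeneous; hence, with $d$ and $\delta_1,\ldots,\delta_n$ as introduced before the statement, $\supp g_n\subseteq\curl{x_n^{\delta_n},x_n^{\delta_n+d},\ldots,x_n^{D-d}}$ and $\supp g_i\subseteq\curl{x_n^{\delta_i},x_n^{\delta_i+d},\ldots}$. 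Here $S=\Staircase(\cG)=\curl{1,x_n,\ldots,x_n^{D-1}}$, so that $S\cap\cT[\Lambda_{\pos}]=\curl{x_n^{kd}\mid 0\leq kd\leq D-1}$, whose cardinality $m$ is the common size of all the linear systems below.

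Next I would feed this to the \lsFGLM algorithm exactly in the setting of the \spFGLM algorithm: with a random $\br$, put $\brac{x_n^j}_{\bseq}=\br^{\T}M_n^j\bone$ and $\brac{x_ix_n^j}_{\bseq}=\br^{\T}M_n^j(M_i\bone)$; as in the correctness analysis of \spFGLM the pertinent (multi-)Hankel matrices are invertible (in particular when $I$ is Gorenstein) and the ideal of \Crels of $\bseq$ equals $I$ with high probability. Each $g_j$ is then the unique solution of the square Hankel system displayed just before the statement, and, as explained there, multiplying the column labels of the system for $g_i$ by $x_n^{\delta_n-\delta_i}$, adjusted by a multiple of $d$ so that exponents stay nonnegative, aligns all of them onto the single Hankel matrix $H=\pare{\brac{x_n^{2\delta_n+(p+q)d}}_{\bseq}}_{0\leq p,q<m}$, which is again invertible; thus $\cG$ is recovered from one factorization of $H$ followed by $n$ back-substitutions.

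Then comes the accounting. The entries of $H$ are $\brac{x_n^{2\delta_n+ld}}_{\bseq}$ for $0\leq l\leq 2m-2$; the constant vector of the system for $g_n$ adds $l=m,\ldots,2m-1$; the constant vector of the realigned system for $g_i$ uses $\brac{x_ix_n^{2\delta_n-\delta_i+ld}}_{\bseq}$ for $0\leq l\leq m-1$. Matching $G$-degrees and exponent ranges, and using $\delta_n\equiv D\pmod d$, these monomials turn out to be exactly those of $(S\cap\cT[\Lambda_{\pos}])+(S\cap\cT[\Lambda_{\pos}])$ (coming from $H$), of $(S\cap\cT[\Lambda_{\pos}])+\curl{x_n^D}$ (from the constant vector for $g_n$) and of $(S\cap\cT[\Lambda_{\pos}])+\curl{x_i}$ for $1\leq i\leq n-1$ (from the constant vectors for the $g_i$), \ie exactly those of $(S\cap\cT[\Lambda_{\pos}])+\big((S\cap\cT[\Lambda_{\pos}])\cup\LM(\cG)\big)$, which gives the stated number of table queries. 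For the operations I would use~\cite{FaugereS2013}: $M_n$ splits into $|G|$ nonzero blocks of size at most $\ceil{D/|G|}$ with its $k$ non-trivial columns evenly distributed, so applying $M_n$ to a vector supported on a single $G$-isotypic component costs $O(kD/|G|^2)$; iterating $j\mapsto j+1$ on $\bone$ and on each $M_i\bone$ up to exponent $O(D)$ --- or, faster, through the big steps by $M_n^d$ discussed in the text --- produces all the required table terms in $O(nkD^2/|G|^2)$ operations, the $O(nm)$ scalar products against $\br$ cost $O(nD^2/|G|)$ more, and the $n$ Hankel solves add $\tilde O(nm)$; all of this lies in $O(nkD^2/|G|)$.

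I expect the accounting to be the main obstacle, for two reasons. Identifying the set of queried monomials \emph{exactly} with the Minkowski sum calls for careful bookkeeping with the residues $\delta_i\bmod d$ once the systems for the $g_i$ are realigned onto $H$ (and for keeping the shifted exponents nonnegative, which may require an extra multiple of $d$). And the invertibility of the reduced matrix $H$ must be traced back to the Gorenstein hypothesis --- equivalently, to the full-rank statement used in the correctness proof of the \sFGLM algorithm, now restricted to the trivial $G$-isotypic component. The remaining estimates follow routinely from the block structure of $M_n$ recalled from~\cite{FaugereS2013}.
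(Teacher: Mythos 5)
Your proposal follows essentially the same route as the paper's proof: Property~S together with $G$-homogeneity pins down the supports and reduces everything to the Hankel systems displayed before the statement, the table queries are counted through the Minkowski sum, and the operation count comes from the $O(kD/|G|)$ nonzero entries of $M_n$ (and the corresponding bound for $M_n^d$), the Hankel solves being negligible. The only place you overshoot is in claiming the queried monomials are \emph{exactly} the Minkowski sum: after realignment they are merely in bijection with subsets of it (the exponents are shifted by $2\delta_n$, resp.\ $\delta_n-\delta_i$), but since the proposition only asserts an upper bound on the number of table terms, this injection --- which is all the paper itself establishes --- suffices, and the delicate residue bookkeeping you anticipate as ``the main obstacle'' is not actually needed.
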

\begin{proof}
  Since the ideal $I$ satisfies Property~S, the staircase $S$ associated to its
  $\LEX(x_n\prec\cdots\prec x_1)$ \gb is
  $\curl{1,x_n,\ldots,x_n^{D-1}}$. Therefore, by definition of $d$,
  $S\cap\cT[\Lambda_{\pos}]=\curl{1,x_n^d,\ldots,x_n^{\floor{\frac{D-1}{d}}d}}$. Thus,
  the matrix rows labels are in bijection with a subset of
  $S\cap\cT[\Lambda_{\pos}]$ while the matrix column labels and the
  column-vector label are in bijection with a subset of
  $\pare{S\cap\cT[\Lambda_{\pos}]}\cup\LM(\cG)$. This show that only
  $\Card{\pare{(S\cap\cT[\Lambda_{\pos}])+
      \big((S\cap\cT[\Lambda_{\pos}])\cup\LM(\cG)\big)}}$
  table terms are required.

  Since $I$ also satisfies Property~M, $M_n$ has $k$ non-trivial
  columns and $D-k$ columns that are vectors of the canonical
  basis. Furthermore, these non-trivial columns correspond
  to $G$-homogeneous polynomials, so each of them has at most
  $O(D/|G|)$ nonzero coefficients. Thus, $M_n$ has $O(k D/|G|)$ nonzero
  coefficients.
  Now, computing $v_{2\delta_n}$ requires $2\delta_n$ multiplications
  between $M_n$ and a vector. Hence
  $v_{2\delta_n}$ can be
  computed in $O\pare{\delta_n k D/|G|}$ operations.

  It remains to compute
  $v_{2\delta_n+j d}=\br^{\T}M_n^{2\delta+j d}$ for all $j$ up to
  $(2D-d-2\delta_n)/d$ by successive
  multiplications by $M_n^d$. While $M_n^d$ has $\max(D,k d)$ non-trivial
  columns, these non-trivial columns still represent $G$-homogeneous
  polynomials, thus $M_n^d$ has $O(k d D/|G|)$ nonzero coefficients. Hence,
  all these vectors can be computed in
  $O(k D^2/|G|)$ operations.

  For the constant vectors of the Hankel systems, we need to
  extract the coefficients 
  corresponding to $x_i$ of vectors $v_{2\delta_n-\delta_i+j d}$ for
  each $i$ and each $j$. First, let us notice that each vector
  $v_{2\delta_n-\delta_i}$ has been computed in order to obtain
  $v_{2\delta_n}$. Then, the others are computed by successive
  multiplications by $M_n^d$, as for the vectors
  $v_{2\delta_n+j d}$. Thus, they can be obtained in $O(n k D^2/|G|)$ operations.

  Finally, these linear systems are Hankel of size $O(D/d)$ sharing
  the same matrix and thus can be solved
  in $O\pare{\sM\pare{\frac{D}{d}}\pare{n+\log\frac{D}{d}}}$ operations,
  see~\cite{BrentGY1980}. This step is therefore not the bottleneck of the algorithm.
\end{proof}


\section{Adaptive approach}\label{s:adaptive}
\subsection{The \asFGLM algorithm}
A drawback of the \sFGLM algorithm is that, in order to return the
correct \gb, it needs to be called with a staircase $T$ that contains
both the support of the \gb and its associated staircase. Without the
help of an oracle, which we have in a multi-modular setting for
instance, it is not an easy task to find such a $T$. Thus, an adaptive
variant was designed by the authors
in~\cite{BerthomieuBF2015,BerthomieuBF2017} in order to discover the
associated staircase and the \gb step by step. As a byproduct, it also
minimizes the number of table queries.

Given a table $\bseq$ and a monomial ordering $\prec$, the
\asFGLM starts
with the empty set
$S=\emptyset$. At each step, $S$ is a staircase and a subset of
  the correct one. Then, for a monomial $\bx^{\bi}$ such that
  $S\cup\curl{\bx^{\bi}}$ is also a staircase, if
$H_{S\cup\curl{\bx^{\bi}},S\cup\curl{\bx^{\bi}}}$
has a greater rank than $H_{S,S}$, then $S$ is replaced by
$S\cup\curl{\bx^{\bi}}$. Otherwise we have found a relation with leading
monomial $\bx^{\bi}$ and we shall never try any multiple of
$\bx^{\bi}$ as a new term in $S$.

In the cone setting, as in Section~\ref{ss:guessing_cone}, the two
strategies can be used. If we build an auxiliary table
$\bseqq\in\K^{\N^{\nu}}$, then the \asFGLM algorithm can directly be
called on $\bseqq$ provided we only try to add monomials $\by^{\bj}$
that are in $\cT[\N^{\nu}]/I(\cC)$. If we rather call it on the
original table $\bseq\in\K^{\N^n}$, then we modify the algorithm so
that only monomials in $\cT[\cC]$ are used. Furthermore, once a
relation with leading monomial $\bx^{\bi}$ is found, we shall never
try any multiple $\bx^{\bi+\bj}$ in the cone, \ie with
$\bx^{\bj}\in\cT[\cC]$.
\begin{example}
  Consider the linear King walk
  $\bseq=\pare{\seq_{i_0,i_1}}_{(i_0,i_1)\in\N^2}$ counting the number
  of ways to reach $i_1$ in $i_0$ steps of size $1$ starting from $0$ in the
  nonnegative ray. It is clear that $\seq_{i_0,i_1}=0$ whenever
  either $i_1>i_0$ or $i_0+i_1=1\bmod 2$, so that we shall only consider the cone
  \begin{align*}
    \cC &= \curl{(i_0,i_1)\in\N^2\middle|
          i_0+i_1=0\bmod 2,i_1\leq i_0}\\
        &= (1,1)\N+(0,2)\N.
  \end{align*}
  Assume we consider
  the $\LEX(x_1\prec x_0)$ ordering, so that
  $\cT[\cC]=\curl{1,x_0x_1,x_0^2,x_0^2x_1^2,x_0^4,\ldots}$.
  \begin{enumerate}
  \item We build the matrix $\kbordermatrix{
      \	&1\\
      1	&1\\
    }$ which has full rank.
  \item We increase the matrix by adding
    monomials in $\cT[\cC]$ so we build $\kbordermatrix{
      \		&1	&x_0x_1\\
      1		&1	&1\\
      x_0x_1	&1	&1\\
    }$ which does not have full rank, so we have found the fake relation
    $x_0x_1-1$.
  \item We increase the matrix to
    $\kbordermatrix{
      \		&1	&x_0^2\\
      1		&1	&1\\
      x_0^2	&1	&2\\
    }$ which has full rank.
  \item We increase the matrix to
    $\kbordermatrix{
      \		&1	&x_0^2	&x_0^4\\
      1		&1	&1	&2\\
      x_0^2	&1	&2	&5\\
      x_0^4	&2	&5	&14\\
    }$ which has full rank.
  \item And so on.
  \end{enumerate}
\end{example}

In the lattice setting however, we need to be more careful. We shall
make one matrix per element in $\Z^n/\Lambda$ and 
each time we must add an extra column and an extra row, they will be
added to the matrix corresponding to the monomial labeling the extra
column. If there is no rank increase, then as usual a relation is
found and no multiple of this monomial will ever label any new column in
\emph{any} matrix.
This yields Algorithm~\ref{algo:lattice_asFGLM} and Theorem~\ref{th:lasFGLM}.
\begin{algorithm2e}[htbp!]
  \small
  \DontPrintSemicolon
  \TitleOfAlgo{\lasFGLM\label{algo:lattice_asFGLM}}
  \KwIn{A table $\bseq=(\seq_{\bi})_{\bi\in\N^n}$ with coefficients in
    $\K$, a monomial ordering $\prec$, a nonnegative lattice
    $\Lambda\subseteq\N^n$,
    a set $\cA\subseteq\N^n$ containing $0$ such that $\Lambda+\cA=\Z^n$.}
  \KwOut{A set $G$ of relations.}
  \lIf{$\seq_{(0,\ldots,0)}=0$}{\KwRet $[1]$.}
  $L \coloneqq \curl{x_1,\ldots,x_n}$.\;
  Sort $L$ by increasing order \wrt $\prec$.\;
  $G \coloneqq \emptyset$ \tcp*{the future set of relations}
  \lForall(\tcp*[f]{the future staircase}){$\ba\in\cA$}{
    $S_{\ba}\coloneqq \curl{1}$.
  }
  \While{$L\neq\emptyset$}{
    $m\coloneqq $ first element of $L$ and remove it from $L$.\;
    Pick $\ba\in\cA$ such that $m\in\cT[\pare{\ba+\Lambda}_{\pos}]$.\;
    $S' \coloneqq S_{\ba}\cup\curl{m}$.\;
    \uIf(\tcp*[f]{No relation}){$H_{S',S'}$ has full rank}{
      $S_{\ba} \coloneqq S'$.\;
      $L \coloneqq L\cup\curl{x_1m,\ldots,x_n m}$
      Sort $L$ by increasing order \wrt $\prec$ and remove duplicates
      and multiples of $\LM(G)$.
    }
    \Else(\tcp*[f]{Relation!}){
      Solve $H_{S_{\ba},S_{\ba}}\bgamma+H_{S_{\ba},\curl{m}}=0$.\;
      $G \coloneqq G\cup\curl{m+\sum_{s\in S_{\ba}}\gamma_s s}$ and
      remove multiples of $m$ in $L$.\;
    }
  }
  \Return $G$.
\end{algorithm2e}

\begin{theorem}\label{th:lasFGLM}
  Let $\Lambda$ be a sublattice of $\Z^n$ with fundamental domain
  $\cA$. Let $\prec$ be a 
  monomial ordering on $\cT$.
  Let us assume that the \lasFGLM algorithm called on table $\bseq$,
  $\prec$, $\Lambda$ and $\cA\subseteq\N^n$ returns a non-empty set of
  polynomials $G$.

  Let us denote by $S$ the associated staircase to
  $G$ and $S_{\ba}=S\cap\cT[\pare{\ba+\Lambda}_{\pos}]$ for each
  $\ba\in\cA$.

  Then, for any polynomial $g\in G$ with
  $\LM(g)\in\cT[\pare{\ba+\Lambda}_{\pos}]$ and any $s\in S_{\ba}$
  with $s\prec\LM(g)$, we have $\brac{gs}_{\bseq}=0$.

  Furthermore, let $\cG$ be a \gb for $\prec$ spanning a
  $0$-dimensional ideal such that for all $g\in
  \cG$, there exists $\ba\in\cA$ such that $\supp
  g\subset\cT[\pare{\ba+\Lambda}_{\pos}]$. Let $\cS$ be the associated
  staircase and $\bseq$ be a generic
  \Cfin table whose ideal of 
  relations is spanned by $\cG$. Then, there exists a non empty
  Zariski open set of values for the table terms $\brac{s}_{\bseq}$ of $\bseq$, with $s\in
  \cS$, such that
  the \lasFGLM
  algorithm called on $\bseq$, $\prec$ and $\cA$ correctly
  guesses $\cG$.
\end{theorem}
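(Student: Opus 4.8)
The plan is to prove Theorem~\ref{th:lasFGLM} in two stages, paralleling the proof of Theorem~\ref{th:lattice_sFGLM} but adapting the bookkeeping to the incremental, adaptive nature of the \lasFGLM algorithm. First I would establish the \emph{unconditional} part: for any input on which the algorithm terminates with a non-empty output $G$, every $g\in G$ satisfies $\brac{gs}_{\bseq}=0$ for all $s\in S_{\ba}$ with $s\prec\LM(g)$, where $\LM(g)\in\cT[\pare{\ba+\Lambda}_{\pos}]$. This is immediate from the algorithm's construction: when $g$ is produced, $m=\LM(g)$ was appended to $S_{\ba}$ to form $S'=S_{\ba}\cup\curl{m}$, the matrix $H_{S',S'}$ failed to be full rank, and $\bgamma$ was computed as the solution of $H_{S_{\ba},S_{\ba}}\bgamma+H_{S_{\ba},\curl{m}}=0$. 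The entries of $H_{S_{\ba},S_{\ba}}$ in row $s'$ and column $s$ are $\brac{s's}_{\bseq}$, and the entry of $H_{S_{\ba},\curl m}$ in row $s'$ is $\brac{s'm}_{\bseq}$, so for each $s'\in S_{\ba}$ we get $\brac{s'\,(m+\sum_{s\in S_{\ba}}\gamma_s s)}_{\bseq}=\brac{s'g}_{\bseq}=0$. Since the algorithm only ever enlarges the $S_{\ba}$'s, and since a monomial $s\prec\LM(g)$ that is in the final $S_{\ba}$ was already in $S_{\ba}$ at the moment $g$ was discovered (monomials are processed in $\prec$-increasing order, so nothing $\succ m$ had yet been added, and the $S_\ba$ only grows), this yields the first claim.

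Next, the genericity part. Assume $\cG$ is a reduced \gb for $\prec$ of a $0$-dimensional ideal $I$ with each $\supp g\subset\cT[\pare{\ba+\Lambda}_{\pos}]$ for some $\ba\in\cA$, with staircase $\cS$, and let $\bseq$ be \Cfin with ideal of relations $I$, parametrized by the free values $\brac{s}_{\bseq}$ for $s\in\cS$. Writing $\cS_{\ba}=\{1\}\cup(\cS\cap\cT[\pare{\ba+\Lambda}_{\pos}])$, I would argue that each matrix $H_{\cS_{\ba},\cS_{\ba}}$ is generically invertible. The key point, borrowed from the proof of Theorem~\ref{th:lattice_sFGLM}: the full matrix $H_{\cS,\cS}$ is invertible (this is the standard \sFGLM fact for a \Cfin table), and $H_{\cS_{\ba},\cS}$ has full row rank because to recover any polynomial with support in $\cT[\pare{\ba+\Lambda}_{\pos}]$ one needs only the shifts by monomials in $\cS_{\ba}$; generically the square submatrix $H_{\cS_{\ba},\cS_{\ba}}$ extracted from those columns is then full rank. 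Taking the nonempty Zariski open set where all $\Card\cA$ of these determinants, together with $\det H_{\cS,\cS}$, are nonzero gives the required open set of parameter values.

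Finally I would run the \lasFGLM algorithm symbolically on such a generic $\bseq$ and check by induction on the processing order that it reconstructs $\cG$ exactly. The invariant to maintain is: at each step each $S_{\ba}$ equals $\cS_{\ba}$ truncated at the current monomial, and $G$ contains exactly those elements of $\cG$ whose leading monomial has already been processed. When a monomial $m\in\cT[\pare{\ba+\Lambda}_{\pos}]$ is popped from $L$: if $m\in\cS$, then $H_{S',S'}$ is a submatrix of the invertible $H_{\cS_{\ba},\cS_{\ba}}$ containing its principal part, hence full rank, so $m$ is correctly added to $S_{\ba}$ and $x_1m,\ldots,x_nm$ enqueued; if $m=\LM(g)$ for some $g\in\cG$, then the column of $H_{S',S'}$ labelled by $m$ is a $\K$-combination of the columns labelled by $\cS_{\ba}$ (because $g\in I$ so $\brac{s'g}_{\bseq}=0$ for all $s'$, in particular all $s'\in S_{\ba}$), hence the rank does not increase, the solved Hankel system returns exactly the coefficients of $g$ by uniqueness (invertibility of $H_{S_{\ba},S_{\ba}}=H_{\cS_{\ba},\cS_{\ba}}$), and multiples of $m$ are purged; and no third case arises because, $I$ being $0$-dimensional, every monomial not in $\cS$ is a multiple of some $\LM(g)$, hence was already removed from $L$. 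One subtlety to address carefully is that a monomial $m$ lying in $\cS$ must never be prematurely discarded as a multiple of some already-found leading monomial — this holds because $\cS$ is a staircase and the elements of $\LM(\cG)$ lie outside $\cS$, so no divisor of $m$ is a leading monomial. The main obstacle, and the place where the argument needs the most care, is precisely this interaction between the queue management (removal of multiples of $\LM(G)$ across \emph{all} residue classes $\ba$) and the per-class matrices: one must be sure that a genuine staircase monomial in class $\ba$ is never skipped because of a relation discovered in class $\ba'\neq\ba$, and conversely that enqueuing $x_1 m,\ldots,x_n m$ eventually exposes every staircase monomial. Both follow from the staircase property of $\cS$ and the fact that $\{x_1,\ldots,x_n\}$ generates $\cT$ multiplicatively, but making the induction airtight requires tracking which class each monomial falls in and verifying that $\cA$ being a fundamental domain makes this assignment well defined.
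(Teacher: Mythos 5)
Your treatment of the first assertion and your overall strategy for the second one coincide with the paper's proof, but your genericity argument has a concrete flaw. You take the Zariski open set to be defined by the non-vanishing of $\det H_{\cS,\cS}$ together with the $\Card\cA$ determinants $\det H_{\cS_{\ba},\cS_{\ba}}$, and then, in your induction, you justify the ``full rank'' branch for a staircase monomial $m$ by asserting that $H_{S',S'}$ is a submatrix of the invertible $H_{\cS_{\ba},\cS_{\ba}}$ and hence full rank. That deduction is invalid: a principal submatrix of an invertible symmetric matrix can be singular. For instance, with $n=1$, $\Lambda=\Z$, $\cG=\{x^2-a x-b\}$ and $\cS=\{1,x\}$, the matrix $H_{\cS,\cS}=\left(\begin{smallmatrix}\seq_0&\seq_1\\ \seq_1&\seq_2\end{smallmatrix}\right)$ is invertible whenever $\seq_0=0$ and $\seq_1\neq 0$, yet the algorithm fails at its very first test because the $1\times 1$ matrix $(\seq_0)$ is singular. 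In the adaptive setting such an intermediate rank drop is fatal, not harmless: the algorithm records a fake relation with leading monomial $m$ and purges every multiple of $m$ from $L$, so staircase monomials and leading monomials of $\cG$ above $m$ are never examined and $\cG$ cannot be recovered. Your proposed open set therefore does not guarantee correctness.

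What is actually required --- and what the paper's proof states, albeit tersely --- is that the rank condition in the \textbf{if} statement hold for \emph{every} $m\in\cS_{\ba}$ and every $\ba$, i.e.\ that each $\prec$-leading principal minor of each $H_{\cS_{\ba},\cS_{\ba}}$ (with $1\in\cS_{\ba}$) be nonzero, not merely the full determinant. Each of these finitely many minors must separately be shown to be generically nonzero, by running your shift argument on the prefix $R=\{1\}\cup\pare{\cS_{\ba}\cap\curl{t\preceq m}}$ rather than on $\cS_{\ba}$ itself; the Zariski open set is then the intersection of all these non-vanishing conditions. With the open set enlarged in this way your induction goes through, and the queue-management subtleties you flag at the end (no premature discarding of staircase monomials, exhaustive enumeration via the products $x_1m,\ldots,x_nm$) are indeed handled correctly by the staircase property of $\cS$.
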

\begin{proof}
  In the while loop, either monomial $m$ is added to the
  staircase $S_{\ba}$ or it is the leading monomial of a polynomial
  $g$ that is added to $G$.

  In the latter case, only monomials less than $m$ can have been added
  to $S_{\ba}$. Thus, the current set $S_{\ba}$ is actually the
  final set $S_{\ba}$ with only elements less than $m$, \ie
  $S_{\ba}\cap\curl{t\prec m}$. Now,
  $H_{S_{\ba}\cap\curl{t\prec m},S_{\ba}\cap\curl{t\prec m}}\gamma
  +H_{S_{\ba}\cap\curl{t\prec m},\curl{m}}=0$ is equivalent to $\brac{gs}_{\bseq}=0$
  for any $s$ a row index, that is $s\in S_{\ba}$ with $s\prec m$.
  
  Let us prove the second assertion. For any $\ba\in\cA$, let
  $\cS_{\ba}=\cS\cap\cT[\pare{\ba+\Lambda}_{\pos}]$. A necessary and sufficient
  condition for the \lasFGLM algorithm to correctly guess $\cG$ is
  that for each $\ba$, $\cS_{\ba}\subseteq S_{\ba}$, which means that
  $S_{\ba}=\curl{1}\cup\cS_{\ba}$. This can only happen if, for each
  $\ba$ and each monomial $m\in\cS_{\ba}$, the rank condition in the
  if statement is
  fulfilled.
  Following the proof of Theorem~\ref{th:lattice_sFGLM}, we can build
  a sequence $\bseqq$ from $\bseq$ whose ideal of relations is also
  spanned by $\cG$ but whose such that the rank conditions in the if
  statement is satisfied for all monomial $m\in\cS$.
\end{proof}

\begin{remark}
  If an incorrect staircase is guessed, then not much can be said on
  the output set of polynomials compared to the correct \gb. However,
  we know that the guessed staircase is included in the correct one.
\end{remark}

\begin{example}
  Let us consider the same first table as in Example~\ref{ex:lattice},
  $\bseq=\pare{2^i\pare{j+1\bmod 3}}_{(i,j)\in\N^2}$ and its
  associated lattice $\Lambda=(0,3)\Z+(1,0)\Z$, so that
  $\cA=\curl{(0,0),(0,1),(0,2)}$. We also consider the $\LEX(y\prec
  x)$ ordering.
  \begin{enumerate}
  \item We build three matrices $\kbordermatrix{
      \	&1\\
      1	&1\\
    }$ which have full rank.
  \item We increase the second matrix to $\kbordermatrix{
      \		&1	&y\\
      1		&1	&2\\
      y		&2	&0\\
    }$ which has full rank.
  \item We increase the third matrix to
    $\kbordermatrix{
      \		&1	&y^2\\
      1		&1	&0\\
      y^2	&0	&2\\
    }$ which has full rank.
  \item We increase the first matrix to
    $\kbordermatrix{
      \		&1	&y^3\\
      1		&1	&1\\
      y^3	&1	&1\\
    }$ which does not have full rank so that we have found that $y^3-1$ is in the
    ideal of relations.
  \item We increase the first matrix to
    $\kbordermatrix{
      \		&1	&x\\
      1		&1	&2\\
      x		&2	&4\\
    }$ which does not have full rank so that we have found that $x-2$ is in the
    ideal of relations.
  \item We return $\curl{y^3-1,x-2}$.
  \end{enumerate}

\end{example}

\subsection{Mixed approach for guessing \Prels}
In~\cite{BerthomieuF2016}, the authors proposed a mixed approach for
guessing \Prels based on a \gb computations for reducing the number of
table queries. The idea is that if two polynomials
$g_1,g_2\in\K[\bt]\ideal{\bx}$ are \Prels satisfied by the table, then
any polynomial in $\ideal{g_1,g_2}$ is also a \Prel. Therefore, as
soon as two \Prels $g_1$ and $g_2$ are guessed, the goal is to
compute a \gb $\curl{g_1,g_2,\ldots,g_r}$ of $\ideal{g_1,g_2}$. This
will yield polynomials, namely $g_3,\ldots,g_r$, whose leading monomials are not in
$\ideal{\LM(g_1),\LM(g_2)}$. The advantage of this method is
twofold. First, since
$\LM(g_3),\ldots,\LM(g_r)\succ\LM(g_1),\LM(g_2)$, they require more
queries to the table to be correctly guessed. Yet, such a \gb
computation does not require any more queries. Then, these \Prels may
help us determine that the ideal of \Prels is $0$-dimensional in
$\K(\bt)\ideal{\bx}$. This is a necessary condition for the table to
be \Pfin.

The aim of this section is to extend this approach for guessing \Prels
of a table when only considering terms in a cone or when the ideal of
relations is stable by the action of a subgroup of $\GL(n)$.

\begin{lemma}\label{lem:sgb_skewpol}
  Let $\cT[\cC]$ be a cone of monomials in $x_1,\ldots,x_n$, as before.
  Let us assume that $f_1,f_2\in\K[\bt]\ideal{\bx}$ are both polynomials
  with monomials in
  $\cT[\N^n]\times\cT[\cC]=\curl{\bt^{\bk}\bx^{\bi}\middle|\bx^{\bi}\in\cT[\cC]}$.
  Then, any polynomial
  $f_1a_1+f_2a_2$ in the right ideal $\ideal{f_1,f_2}$, such that
  $\supp a_1,\supp a_2\in\cT[\N^n]\times\cT[\cC]$, has its
  support in $\cT[\N^n]\times\cT[\cC]$ as well.

  In particular, we can compute a
  \sgb of $\ideal{f_1,f_2}$ 
  with monomials all in $\cT[\N^n]\times\cT[\cC]$ using Buchberger's
  algorithm or Faug\`ere's \Fquatre algorithm, restricted to only
  multiplying the polynomials by monomials in $\cT[\N^n]\times\cT[\cC]$.
\end{lemma}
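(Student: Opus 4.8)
The plan is to exhibit an $\N^n$-grading of $\K[\bt]\ideal{\bx}$ by the $\bx$-degree under which the polynomials with all monomials in $\cT[\N^n]\times\cT[\cC]$ form a subalgebra $A_{\cC}$, and then to check that the restricted forms of Buchberger's algorithm and of \Fquatre never leave $A_{\cC}$, so that their correctness and termination reduce to the known theory of \sgbs.

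First I would record the grading. Assign to a monomial $\bt^{\bk}\bx^{\bi}$ the $\bx$-degree $\bi\in\N^n$. The commutation rules recalled above give $\bt^{\bl}\bx^{\bj}\bt^{\bk}\bx^{\bi}=\bt^{\bl}(\bt-\bj)^{\bk}\bx^{\bj+\bi}$, so the product of a monomial of $\bx$-degree $\bj$ by one of $\bx$-degree $\bi$ is a $\K$-linear combination of monomials of $\bx$-degree $\bi+\bj$; hence $\K[\bt]\ideal{\bx}=\bigoplus_{\bi\in\N^n}\K[\bt]\bx^{\bi}$ is an $\N^n$-graded ring. Let $A_{\cC}=\bigoplus_{\bi\in\cC}\K[\bt]\bx^{\bi}$, \ie the set of $f\in\K[\bt]\ideal{\bx}$ all of whose monomials lie in $\cT[\N^n]\times\cT[\cC]$. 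Because $\cC$ is a submonoid of $\N^n$ (in particular $0\in\cC$, so $1\in A_{\cC}$), $A_{\cC}$ is a sub-$\K$-algebra of $\K[\bt]\ideal{\bx}$. Since $f_1,f_2,a_1,a_2\in A_{\cC}$ by hypothesis, $f_1a_1+f_2a_2\in A_{\cC}$, which is the first assertion.

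Next I would trace the restricted algorithm. In the Ore-algebra setting of \cite{Levandovskyy2005}, Buchberger's algorithm (\resp \Fquatre) builds its output from $f_1,f_2$ only by forming S-polynomials $c_1\,g_1m_1-c_2\,g_2m_2$ with $\LM(g_1)m_1=\LM(g_2)m_2$ and by performing reductions $p\mapsto p-c\,g\,m$; restricting as in the statement amounts to requiring every monomial multiplier $m,m_1,m_2$ to lie in $\cT[\N^n]\times\cT[\cC]$. As $f_1,f_2\in A_{\cC}$ and $A_{\cC}$ is a subalgebra, an induction on the steps of the algorithm shows that every polynomial it produces --- in particular every element of the running basis, hence its leading monomial --- lies in $A_{\cC}$, that is, has all its monomials in $\cT[\N^n]\times\cT[\cC]$. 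This yields the support statement, provided the restricted algorithm is correct and terminates.

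Finally I would argue correctness and termination by reducing to the commutative case. The key point is that, for any monomial ordering $\prec$, the leading monomial is multiplicative on monomials of $\K[\bt]\ideal{\bx}$: since $1\prec\bt^{\bk}$ for $\bk\neq 0$, one gets $\LM\bigl((\bt-\bj)^{\bk}\bigr)=\bt^{\bk}$, whence $\LM(mm')=\LM(m)\LM(m')$. Hence the combinatorics of critical pairs and of reductions is governed entirely by the monoid structure on the $\bx$-exponents, the $\bt$-variables being merely a free commutative grading that never affects whether a product stays in $\cT[\N^n]\times\cT[\cC]$. So the analysis of \cite{FaugereSS2014,BenderFT2018} for \sgbs carries over: the restricted algorithm, run until every restricted S-polynomial reduces to $0$ modulo the current basis using only restricted reductions, outputs a \sgb of the right ideal $\ideal{f_1,f_2}_{\cC}=f_1A_{\cC}+f_2A_{\cC}$; it terminates because the leading monomials of the accumulated basis elements generate an increasing, hence stationary, chain of monoid ideals of the finitely generated commutative monoid $\N^n\times\cC$. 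The main obstacle I expect is precisely this reduction to the commutative theory: since $\cC$ need not be free, two leading monomials can have several minimal common multiples inside $\cT[\cC]$, so the restricted algorithm must form one S-polynomial per such multiple, and one has to check --- as in \cite{BenderFT2018} --- that this finite family still suffices for Buchberger's criterion in the restricted setting; that the skew $\bt$-variables do not spoil the criterion then follows at once from the multiplicativity of $\LM$.
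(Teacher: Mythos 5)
Your proof is correct and follows essentially the same route as the paper's: the first assertion rests on the identity $\bt^{\bl}\bx^{\bj}\bt^{\bk}\bx^{\bi}=\bt^{\bl}\pare{\bt-\bj}^{\bk}\bx^{\bj+\bi}$, which shows that every monomial of the product has $\bx$-part $\bx^{\bj+\bi}\in\cT[\cC]$ (you package this as an $\N^n$-grading, the paper as a direct support computation), and the second assertion is the same reduction to the \sgb machinery of~\cite{FaugereSS2014,BenderFT2018} via division and common multiples taken inside $\cT[\N^n]\times\cT[\cC]$. You actually supply details the paper leaves implicit --- multiplicativity of $\LM$ on monomials, the need for one S-polynomial per minimal common multiple when $\cC$ is not free, and termination via the Noetherianity of the finitely generated monoid --- all of which are correct under the paper's standing assumption that $\cC$ is finitely generated.
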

\begin{proof}
  We need to prove that if $\supp f$ and $\supp a$ are in
  $\cT[\N^n]\times\cT[\cC]$, then so is $\supp fa$. By linearity, this
  comes down to proving that if two monomials $\bt^{\bl}\bx^{\bj}$ and
  $\bt^{\bk}\bx^{\bi}$ are in $\cT[\N^n]\times\cT[\cC]$, then so is
  the support of their product. Since
  \begin{align*}
    \bt^{\bl}\bx^{\bj}\bt^{\bk}\bx^{\bi}
    &= \bt^{\bl}\pare{\bt-\bj}^{\bk}\bx^{\bj+\bi}\\
    &= \sum_{q_1,\ldots,q_n=0}^{\ell_1,\ldots,\ell_n}
      \binom{k_1}{q_1}\cdots\binom{k_n}{q_n}(-j_1)^{k_1-q_1}\cdots(-j_n)^{k_n-q_n}
      t_1^{\ell_1+q_1}\cdots t_n^{\ell_n+q_n}\bx^{\bj+\bi}
  \end{align*}
  and $\bx^{\bj+\bi}\in\cT[\cC]$,
  $\bt^{\bl}\bx^{\bi}\bt^{\bk}\bx^{\bi}\in\cT[\N^n]\times\cT[\cC]$.

  Now, in
  $\cT[\N^n]\times\cT[\cC]$, we can define the \emph{division} of monomials with
  $m_2|m_1$ if there exists $m_3\in\cT[\N^n]\times\cT[\cC]$ such that
  $m_1=m_2m_3$. Then, we can make a new S-polynomial of two
  polynomials with supports in  $\cT[\N^n]\times\cT[\cC]$ by considering the $\LCM$
  in $\cT[\N^n]\times\cT[\cC]$ of their leading monomials.
\end{proof}
This lemma shows that the definition of \sgbs and the algorithmic
techniques to compute them in~\cite{FaugereSS2014} can be extended to
skew-polynomial rings $\K[\bt]\ideal{\bx}$ with commutation rules
$t_p x_p=x_p (t_p+1)$.

Using the definitions and notation of Section~\ref{sss:gln}, we have the
following lemma.
\begin{lemma}\label{lem:group_skewpol}
  Let $G$ be a finite group of diagonal matrices acting on
  $t_1,\ldots,t_n,\allowbreak x_1,\ldots,\allowbreak x_n$,
  then $G$ leaves $t_1,\ldots,t_n$, each, invariant.

  Assume that $f_1,f_2\in\K[\bt]\ideal{\bx}$ are both
  $G$-homogeneous polynomials,
  then their S-polynomial
  is also $G$-homogeneous. Thus, so are all the elements of a reduced \gb of
  $\ideal{f_1,f_2}$.
\end{lemma}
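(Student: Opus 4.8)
The plan is to transport the notion of $G$-degree from $\K[\bx]$ to the skew-polynomial ring $\K[\bt]\ideal{\bx}$, to check that it stays additive under the quasi-commutative product, and then to rerun the argument of~\cite[Theorem~4]{FaugereS2013}. First I would settle the first assertion: after the diagonalization of Section~\ref{sss:gln}, a matrix $A\in G$ scales $x_p$ by a root of unity $\alpha_p$, hence acts on the differential operator $\partial_p$ by $\alpha_p^{-1}$, so it fixes the Euler operator $t_p=x_p\partial_p$. Consequently the induced action on $\K[\bt]\ideal{\bx}$ fixes each $t_p$ --- which is consistent with the relation $t_px_p=x_p(t_p+1)$ --- and it therefore makes sense to define the $G$-degree of a monomial $\bt^{\bk}\bx^{\bi}$ of $\K[\bt]\ideal{\bx}$ to be the $G$-degree of $\bx^{\bi}$ in the sense of Definition~\ref{def:G-deg}; a polynomial of $\K[\bt]\ideal{\bx}$ is $G$-homogeneous precisely when all its monomials share a common $G$-degree.

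Next I would prove additivity. Reusing the expansion
\[
  \bt^{\bl}\bx^{\bj}\,\bt^{\bk}\bx^{\bi}
  =\bt^{\bl}\pare{\bt-\bj}^{\bk}\bx^{\bj+\bi}
\]
already computed in the proof of Lemma~\ref{lem:sgb_skewpol}, every monomial occurring in the product of $\bt^{\bl}\bx^{\bj}$ by $\bt^{\bk}\bx^{\bi}$ carries the same $\bx$-part $\bx^{\bj+\bi}$: the commutation rule only creates additional monomials of strictly smaller $\bt$-degree with that same $\bx$-part. Hence all of them have $G$-degree equal to that of $\bx^{\bj+\bi}$, which by the commutative case is the sum of the $G$-degrees of $\bt^{\bl}\bx^{\bj}$ and $\bt^{\bk}\bx^{\bi}$. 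So the $G$-degree is additive on monomials, and therefore $mf$ and $fm$ are $G$-homogeneous whenever $f$ is, with $G$-degree shifted by that of $m$, for any monomial $m$; by $\K$-linearity this persists when $m$ is replaced by any $G$-homogeneous polynomial, on either side. Moreover, since the leading monomial of a product of monomials is the product of the leading monomials (all other terms having strictly smaller $\bt$-degree), the leading monomial of a $G$-homogeneous polynomial has the same $G$-degree as the polynomial, and the $\LCM$ and S-polynomial constructions of Lemma~\ref{lem:sgb_skewpol} apply.

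With this in hand the lemma follows: letting $\mu=\LCM(\LM(f_1),\LM(f_2))$, the S-polynomial of $f_1$ and $f_2$ is a $\K$-linear combination of the two skew-products obtained by multiplying $f_i$ by the monomial $\mu/\LM(f_i)$ so as to cancel leading terms; by the previous paragraph both skew-products are $G$-homogeneous of $G$-degree $G\text{-deg}(\mu)$, so their combination is $G$-homogeneous (or zero). For the last sentence I would observe that reduction preserves $G$-homogeneity: reducing a $G$-homogeneous $h$ by a $G$-homogeneous $g$ subtracts some $c\,m\,g$ cancelling a term of $h$, and then $G\text{-deg}(mg)=G\text{-deg}(h)$, so $h-c\,m\,g$ is again $G$-homogeneous or vanishes. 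Hence Buchberger's algorithm (or \Fquatre) fed with $G$-homogeneous polynomials only ever manipulates $G$-homogeneous ones, so it returns a $G$-homogeneous \gb; the subsequent interreduction is again a sequence of such reductions, so the reduced \gb of $\ideal{f_1,f_2}$ is $G$-homogeneous.

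The only delicate point, hence the main obstacle, is the bookkeeping of the first two paragraphs: one must be sure the $G$-action genuinely extends to $\K[\bt]\ideal{\bx}$ with the $t_p$ fixed and that the extra monomials produced by $t_px_p=x_p(t_p+1)$ cannot break $G$-homogeneity. Once one notices --- through the explicit expansion already present in the proof of Lemma~\ref{lem:sgb_skewpol} --- that those extra monomials only decrease the $\bt$-degree while keeping the $\bx$-part unchanged, everything reduces to the commutative argument of~\cite[Theorem~4]{FaugereS2013}.
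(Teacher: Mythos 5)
Your proposal is correct and follows essentially the same route as the paper: the same expansion $\bt^{\bl}\bx^{\bj}\bt^{\bk}\bx^{\bi}=\bt^{\bl}\pare{\bt-\bj}^{\bk}\bx^{\bj+\bi}$ shows that the $G$-degree of every monomial in a skew-product depends only on the $\bx$-part and is additive, and the S-polynomial argument is the same. The only cosmetic difference is in the first assertion, where you fix $t_p$ via the interpretation $t_p=x_p\partial_p$ while the paper derives $\zeta^{\tau_p}=1$ by requiring the action to be compatible with the relation $t_px_p-x_pt_p=x_p$; both are valid, and your explicit check that reduction preserves $G$-homogeneity just spells out what the paper leaves implicit.
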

\begin{proof}
  There exists a root of
  unity $\zeta$ such that for each matrix in $G$, there exist integers
  $\tau_1,\ldots,\tau_n,\allowbreak \veps_1,\ldots,\veps_n$ such that for all
  $1\leq p\leq n$, $x_p$ is sent onto $\zeta^{\veps_p}x_p$ and $t_p$
  onto $\zeta^{\tau_p}t_p$.

  Therefore, $t_p x_p-x_p t_p=x_p$ is sent on both
  $\zeta^{\tau_p}t_p\zeta^{\veps_p}x_p-\zeta^{\veps_p}x_p\zeta^{\tau_p}t_p
  =\zeta^{\tau_p+\veps_p}\pare{t_p x_p-x_p t_p}=\zeta^{\tau_p+\veps_p}x_p$
  and $\zeta^{\veps_p}x_p$. Thus, $\zeta^{\tau_p}=1$ and $G$ lets
  $t_p$ invariant. By Definition~\ref{def:G-deg}, this means that the
  $G$-degree of 
  $t_p$ is $0$ so that the $\bt^{\bk}\bx^{\bi}$ and $\bx^{\bi}$ have
  same $G$-degree.

  The S-polynomial of $f_1$ and $f_2$ is
  $f_1\bt^{\bk}\bx^{\bi}-f_2\frac{\LC(f_1)}{\LC(f_2)}\bt^{\bl}\bx^{\bj}$
  with
  $\bt^{\bk}\LM(f_1)\bx^{\bi}
  =\bt^{\bl}\LM(f_2)\bx^{\bj}=\GCD\pare{\LM(f_1),\LM(f_2)}$,
  where $\LC(f)$ stands for leading coefficient of $f$, \ie the
  coefficient of $\LM(f)$. Since
  both terms of the sum have the same leading monomial, it remains to
  show that multiplying a polynomial by a monomial preserves the
  $G$-homogeneity. Since
  $\bt^{\bl}\bx^{\bj}\bt^{\bk}\bx^{\bi}
  =\bt^{\bl}\pare{\bt-\bj}^{\bk}\bx^{\bj+\bi}$,
  it is a $G$-homogeneous polynomial of same $G$-degree as
  $\bx^{\bj+\bi}$. Now, the $G$-degree of $\bx^{\bj+\bi}$ is the sum of the
  $G$-degrees of $\bx^{\bj}$ and $\bx^{\bi}$ and thus of
  $\bt^{\bl}\bx^{\bj}$ and $\bt^{\bk}\bx^{\bi}$.
\end{proof}
From Lemmas~\ref{lem:sgb_skewpol} and~\ref{lem:group_skewpol}, we can
compute a \gb or a \sgb of the ideal spanned by skew-polynomials
associated to \Prels in $\K[\bt]\ideal{\bx}$,  with commutation rules
$t_p x_p=x_p (t_p+1)$, to guess new \Prels in
the cone and lattice settings.

\begin{corollary}
  Let $G$ be a finite diagonal matrix group acting on variables
  $\bt$ and $\bx$. Let
  $I=\ideal{f_1,\ldots,f_s}\subset\K[\bt]\ideal{\bx}$ be an ideal
  spanned by $G$-homogeneous polynomials. Then, one can compute a \gb
  of $I$ by using a quasi-commutative variant of the \Fquatre
  algorithm~\cite{Faugere1999}
  building $|G|$ Macaulay matrices for each $G$-degree.
\end{corollary}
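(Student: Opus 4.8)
The plan is to show that, thanks to Lemmas~\ref{lem:sgb_skewpol} and~\ref{lem:group_skewpol}, a quasi-commutative \Fquatre run on $G$-homogeneous input never leaves the $G$-graded decomposition of $\K[\bt]\ideal{\bx}$, so that every Macaulay matrix it builds is block diagonal with one block per $G$-degree. Write $\Gamma=\Z/q_1\Z\times\cdots\times\Z/q_{\ell}\Z$ for the group of $G$-degrees of Definition~\ref{def:G-deg}, which has $|\Gamma|=|G|$ elements, and for $\bgamma\in\Gamma$ let $V_{\bgamma}$ be the $\K$-span of the monomials $\bt^{\bk}\bx^{\bi}$ of $G$-degree $\bgamma$, so that $\K[\bt]\ideal{\bx}=\bigoplus_{\bgamma\in\Gamma}V_{\bgamma}$ as a $\K$-vector space. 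The first point I would establish is that this decomposition is compatible with the skew multiplication, $V_{\bgamma}\cdot V_{\bdelta}\subseteq V_{\bgamma+\bdelta}$: by linearity it is enough to treat $\bt^{\bl}\bx^{\bj}\,\bt^{\bk}\bx^{\bi}=\bt^{\bl}\pare{\bt-\bj}^{\bk}\bx^{\bj+\bi}$, and the expansion already computed in the proof of Lemma~\ref{lem:group_skewpol} shows this is a $\K$-linear combination of monomials all of $G$-degree equal to that of $\bx^{\bj+\bi}$, since the $\bt$-exponents carry $G$-degree $0$ ($G$ leaving each $t_p$ invariant, again by Lemma~\ref{lem:group_skewpol}). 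In particular $I=\ideal{f_1,\ldots,f_s}$, being generated by $G$-homogeneous elements, is a $\Gamma$-graded ideal $I=\bigoplus_{\bgamma\in\Gamma}(I\cap V_{\bgamma})$.

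Next I would track the algorithm. Extending Lemma~\ref{lem:group_skewpol} from two to finitely many polynomials by an immediate induction over the Buchberger/\Fquatre iterations, the current basis stays $G$-homogeneous: every S-polynomial is $G$-homogeneous, every critical-pair cofactor $\LCM(\LM(f_i),\LM(f_j))/\LM(f_i)$ has a well-defined $G$-degree (a difference in the group $\Gamma$), and every reducer introduced by symbolic preprocessing is again a monomial multiple of a $G$-homogeneous basis element. Hence at each iteration of \Fquatre the rows of the Macaulay matrix are $G$-homogeneous, its columns are indexed by monomials each carrying a $G$-degree, and a row of $G$-degree $\bgamma$ is supported only on columns of $G$-degree $\bgamma$. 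Reordering rows and columns by $G$-degree makes the matrix block diagonal with at most $|G|$ blocks; Gaussian elimination reduces each block independently and returns, in degree $\bgamma$, rows still of $G$-degree $\bgamma$, so the new basis polynomials read off the echelon form are again $G$-homogeneous. By induction on the main loop, the whole \Fquatre computation amounts to $|G|$ independent linear-algebra computations, one Macaulay matrix per $G$-degree, and on termination it returns a \gb of $I$ (the reduced, $G$-homogeneous one if one adds the usual final interreduction, which preserves $G$-homogeneity by the reduction argument of Lemma~\ref{lem:group_skewpol}). If, moreover, all monomial multiplications are restricted to a cone $\cT[\N^n]\times\cT[\cC]$ as in Lemma~\ref{lem:sgb_skewpol}, the same splitting applies and yields a \sgb instead.

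I do not expect the bookkeeping to be difficult; the single point that deserves care is the claim that the Macaulay matrix is genuinely block diagonal, \ie that row reduction cannot create a nonzero entry in a column whose $G$-degree differs from that of the row. This rests entirely on the additivity of the $G$-degree under skew multiplication, which is precisely the expansion $\bt^{\bl}\bx^{\bj}\,\bt^{\bk}\bx^{\bi}=\bt^{\bl}\pare{\bt-\bj}^{\bk}\bx^{\bj+\bi}$ together with the $G$-invariance of the $t_p$; a secondary subtlety is to check that critical-pair selection and symbolic preprocessing only ever multiply already $G$-homogeneous polynomials by monomials, which is immediate from their definitions. Correctness and termination of \Fquatre itself are unchanged, since we merely reorganize its linear algebra, so nothing further has to be proved there.
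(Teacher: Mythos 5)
Your proposal is correct and follows exactly the route the paper intends: the corollary is stated there without an explicit proof, as an immediate consequence of Lemma~\ref{lem:group_skewpol} (S-polynomials and monomial multiples of $G$-homogeneous skew-polynomials remain $G$-homogeneous, with the $t_p$ of $G$-degree $0$), and your write-up simply makes explicit the resulting $G$-graded decomposition and the block-diagonal structure of the \Fquatre Macaulay matrices. No gaps; the only content beyond the paper's implicit argument is the careful verification that row reduction and symbolic preprocessing never mix $G$-degrees, which is exactly the right point to check.
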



\section{Experiments}\label{s:bench}
In this section, we report on our implementations of the
different algorithms of this paper.

We start with the \sFGLM algorithm on a cone, as in
  Subsection~\ref{ss:guessing_cone}, in particular for guessing \Prels of
tables in \textsc{Maple}~2019. This is an extension of
Algorithm~\ref{algo:sFGLM}, see~\cite{BerthomieuF2016}.
We investigate Gessel planar walk $\bgessel$ in the nonnegative
quadrant $\N^2$
with steps in $\{(1,0),(1,1),(-1,0),\allowbreak (-1,-1)\}$ and the
\textsc{3D}-space Walk-$43$ $\bseqq$ of~\cite{BostanBMKM2016} in the
nonnegative octant $\N^3$ with steps in
$\{(-1,-1,-1),\allowbreak (-1,-1,1),\allowbreak (-1,1,0),\allowbreak
(1,0,0)\}$. In particular, we restrict ourselves to a
subsequence of each where one index is $0$. These walks come
naturally with a cone 
structure: for instance whenever $n\neq 2n'+2j$, then
$\gessel_{n,0,j}=0$.
Likewise, whenever $n\neq 8n'+2j+4k$, then
$\seqq_{n,0,j,k} = 0$. Thus, it makes sense to look for the relations
given by the table terms $\gessel_{2 n'+2 j,0,j}$ and $\seqq_{8 n'+2 j+4 k,0,j,k}$.

In Table~\ref{tab:Prel}, we report on the number of computed relations
and the number of relations that do not fail after further testing.
\begin{enumerate}
\item The column Full Orthant means that we consider all the
  table terms $\gessel_{n,0,j}$ and $\seqq_{n,0,j,k}$.
\item The column Half Orthant means that we consider all the
  table terms $\gessel_{2 n',0,j}$ and $\seqq_{2 n',0,j,k}$.
\item The column Cone means that we consider 
  all the table terms $\gessel_{2 n'+2 j,0,j}$ and terms
  $\seqq_{8 n'+2 j+4 k}$, corresponding to the potential nonzero terms.
\end{enumerate}

We
tested two kinds of matrices: matrices that are almost square, with just a
little bit more rows than columns, and matrices that have many more rows
than columns.

We
can notice, as expected in both cases, that by considering only terms on the
nonzero cone we guess many fewer false positive \Prels.
This happens despite our
matrices having fewer rows in the cone setting than in the full
orthant setting, \ie a priori the relations have fewer constraints.
This means that amongst these constraints more are linearly
independent and that in general the number of linearly
dependent rows
is 
responsible for the matrix rank decrease. As a byproduct, this reduces
the number of operations.
\begin{table}[htbp!]
  \centering
  \hspace*{-2cm}
  \setlength\tabcolsep{3pt}
    \begin{tabular}{|l|c|r|r|r|c|r|r|r|c|r|r|r|
      }
      \hline
      Type
      &\multicolumn{4}{c|}{Cone} 
      &\multicolumn{4}{c|}{Half Orthant}
      &\multicolumn{4}{c|}{Full Orthant}\\
      &\multirow{2}{*}{Matrix size}	&\multirow{2}{*}{Queries}
      &\multicolumn{2}{c|}{Relations}
      &\multirow{2}{*}{Matrix size}	&\multirow{2}{*}{Queries}
      &\multicolumn{2}{c|}{Relations}
      &\multirow{2}{*}{Matrix size}	&\multirow{2}{*}{Queries}
      &\multicolumn{2}{c|}{Relations}
      \\
      &&&Fake	&Correct &&&Fake	&Correct	&&&Fake	&Correct 
      \\
      \hline
      $g_{n,0,j}$
      &$\phantom{1\,}444\times \phantom{1\,}441$
      &$\phantom{1\,}866$	&$11$	&$0$
      &$\phantom{1\,}444\times \phantom{1\,}443$
      &$\phantom{1\,}857$	&$68$	&$0$
      &$\phantom{1\,}496\times \phantom{1\,}495$
      &$\phantom{1\,}946$	&$48$	&$0$
      \\
      $g_{n,0,j}$
      &$\phantom{1\,}631\times \phantom{1\,}564$
      &$1\,174$	&$0$	&$0$
      &$\phantom{1\,}961\times \phantom{1\,}581$
      &$1\,506$	&$115$	&$0$
      &$1\,326\times \phantom{1\,}661$
      &$1\,942$	&$84$	&$0$
      \\
      $g_{n,0,j}$
      &$\phantom{1\,}721\times \phantom{1\,}711$
      &$1\,408$	&$15$	&$8$
      &$\phantom{1\,}724\times \phantom{1\,}713$
      &$1\,401$	&$87$	&$0$
      &$\phantom{1\,}726\times \phantom{1\,}715$
      &$1\,386$	&$67$	&$0$
      \\
      $g_{n,0,j}$
      &$1\,951\times 1\,089$
      &$3\,010$	&$0$	&$21$
      &$2\,209\times 1\,036$
      &$3\,196$	&$154$	&$0$
      &$2\,556\times 1\,001$
      &$3\,491$	&$136$	&$6$
      \\
      \hline
      $\seqq_{n,0,i,j}$
      &$\phantom{1\,}223\times \phantom{1\,}211$
      &$\phantom{1\,}430$	&$7$	&$1$
      &$\phantom{1\,}222\times \phantom{1\,}211$
      &$\phantom{1\,}411$	&$25$	&$0$
      &$\phantom{1\,}220\times \phantom{1\,}210$
      &$\phantom{1\,}395$	&$24$	&$0$
      \\
      $\seqq_{n,0,i,j}$
      &$\phantom{1\,}444\times \phantom{1\,}253$
      &$\phantom{1\,}552$	&$2$	&$1$
      &$\phantom{1\,}520\times \phantom{1\,}260$
      &$\phantom{1\,}758$	&$40$	&$0$
      &$\phantom{1\,}680\times \phantom{1\,}267$
      &$\phantom{1\,}912$	&$37$	&$0$
      \\
      $\seqq_{n,0,i,j}$
      &$\phantom{1\,}406\times \phantom{1\,}400$
      &$\phantom{1\,}799$	&$11$	&$6$
      &$\phantom{1\,}406\times \phantom{1\,}400$
      &$\phantom{1\,}772$	&$40$	&$0$
      &$\phantom{1\,}406\times \phantom{1\,}400$
      &$\phantom{1\,}771$	&$27$	&$0$
      \\
      $\seqq_{n,0,i,j}$
      &$\phantom{1\,}806\times \phantom{1\,}522$
      &$1\,320$				&$2$	&$6$
      &$1\,200\times \phantom{1\,}550$
      &$1\,716$	&$78$	&$0$
      &$1\,540\times \phantom{1\,}589$
      &$2\,073$	&$68$	&$0$
      \\
      \hline
    \end{tabular}
  \caption{\label{tab:Prel}\mbox{Guessing fake and correct \Prels with
      Algorithm~\ref{algo:sFGLM} for \Prels on a cone.}}
\end{table}

In Table~\ref{tab:FGLM}, we consider the \FGLM
application, presented in Subsection~\ref{ss:app_fglm}, running on an
\textsc{Intel Xeon} E-2286M with $32$ \textsc{GB}
of \textsc{RAM}. We compute first
a $\DRL$ \gb of an ideal invariant by the action of a finite
diagonal group $\Z/n\Z$
and then the eliminating polynomial of the last
variable. The number $n$ in the names of the systems denotes the number
of variables and the computations were done modulo $2^{30}<p<2^{31}$ such
that a primitive $n$th root of unity exists in $\Z/p\Z$.
The \spFGLM
algorithm~\cite{FaugereM2011,FaugereM2017} has been implemented in
\texttt{C}, as part of the \msolve library~\cite{msolve,msolveweb}, it
generates a scalar table first and then guesses its \Crel with
the \bmy algorithm. Notice that the table generation
is the bottleneck of the method, but it is also the part that
  benefit the most from the occurred speedup.
In the column \spFGLM, we use the whole
multiplication matrix,
while in the column \textsc{lattice} \spFGLM, we use the
$n$ nonzero blocks of the multiplication matrix to perform the
computations and taking advantage of the action of $\Z/n\Z$. We also
compare with \textsc{Maple}~2019 where we use
\texttt{Groebner:-FGLM} to compute a \gb for an ordering
eliminating all the variables but the last one. As expected by
Proposition~\ref{prop:FGLM}, using the
splitting of the multiplication matrix allows us to divide the
computation time by $n$.
\begin{table}[htbp!]
  \centering
  \setlength\tabcolsep{2pt}
    \begin{tabular}{|l|r|r|r|r|r|r|r|}
      \hline
      Type\hfill Degree
      &\multicolumn{2}{c|}{\spFGLM} 
      &\multicolumn{2}{c|}{\textsc{lattice} \spFGLM} 
      &\multicolumn{2}{c|}{\textsc{lattice} speedup} 
      &\textsc{Maple}\\
      &Seq.\ gen.	&Guess.
      &Seq.\ gen.	&Guess.
      &Seq.\ gen.	&Guess.
      &\\
      \hline
      Cyclic-$6$\ \hfill $156$
      &$1\,470$	&$10$
      &$200$	&$2.3$
      &$7.35$	&$4.35$
      &$120\,000$\\
      \hline
      Cyclic-$7$\ \hfill $924$	
      &$64\,000$	&$56$
      &$5\,200$		&$8.3$
      &$12.3$		&$6.75$
      &$13\ s$\\
      \hline
      Random-$3$\ \hfill $294$
      &$3\,100$		&$18$
      &$1\,100$		&$6.8$
      &$2.82$		&$2.65$
      &$510\,000$\\
      \hline
      Random-$3$ bis\ \hfill $3090$
      &$470\,000$	&$170$
      &$83\,000$	&$63$
      &$5.66$		&$2.70$
      &--\\
      \hline
      Random-$4$\ \hfill $896$
      &$69\,000$	&$53$
      &$8\,600$		&$14$
      &$8.02$		&$3.79$
      &$2\,000\ s$\\
      \hline
      Random-$5$\ \hfill $2000$
      &$386\,000$	&$110$
      &$35\,000$	&$24$
      &$11.0$		&$4.58$
      &$49\ s$\\
      \hline
      Random-$6$\ \hfill $1656$
      &$330\,000$	&$91$
      &$26\,000$	&$17$
      &$12.7$		&$5.35$
      &$1\,200\ s$\\
      \hline
      Random-$10$\ \hfill $4160$
      &$13\ s$		&$250$
      &$37\,000$	&$26$
      &$351$		&$9.62$
      &--\\
      \hline
    \end{tabular}
    \caption{\label{tab:FGLM} \FGLM application
      with the action of $\Z/n\Z$ (in
      $\mu s$).}
\end{table}


\section*{Acknowledgments}
We thank the anonymous referees for their careful reading and their helpful
comments to improve this paper.
The authors are supported by the joint \textsc{ANR-FWF}
\textsc{ANR-19-CE48-0015} \textsc{ECARP} project, the \textsc{ANR}
grants \textsc{ANR-18-CE33-0011}
\textsc{Sesame} and \textsc{ANR-19-CE40-0018} \textsc{De Rerum Natura}
projects, the \textsc{PGMO} grant \textsc{CAMiSAdo}, grant
\textsc{FA8665-20-1-7029} of 
the \textsc{EOARD-AFOSR} and and the European
Union's Horizon 2020 research and innovation programme under the Marie
Sk\l{}odowska-Curie grant agreement N.~813211 (\textsc{POEMA}).




\bibliographystyle{elsarticle-harv}
\bibliography{biblio}

\end{document}